\documentclass[12pt]{article}
\usepackage[margin=1.25in]{geometry} 
\usepackage{amsmath,amsthm,amssymb,todonotes}
\usepackage{multirow}
\usepackage{times}
\usepackage{bm}
\usepackage{listings}
\usepackage{graphicx}
\usepackage{enumerate}
\usepackage[colorlinks,
linkcolor=blue,
anchorcolor=blue,
citecolor=blue
]{hyperref}
\usepackage{dsfont}
\usepackage{float}
\usepackage[title]{appendix}
\usepackage{booktabs}
\usepackage{caption}
\usepackage{tikz}
\usetikzlibrary{decorations.pathreplacing}
\usetikzlibrary{arrows,shapes}
\usetikzlibrary{arrows.meta,bending,positioning}
\usetikzlibrary{arrows.meta,decorations.pathreplacing}
\usepackage{subcaption}
\usepackage{graphicx}

\usepackage{pgfplots}
\usepgfplotslibrary{groupplots,fillbetween}
\usetikzlibrary{arrows.meta,positioning,calc}
\pgfplotsset{compat=1.18}

\usepackage{chngcntr}
\usepackage{leftindex}

\usepackage{natbib}

\usepackage[title]{appendix}

\usepackage{bbm}

\usepackage{setspace}
\theoremstyle{plain}

\newtheorem{proposition}{Proposition}
\newtheorem{theorem}{Theorem}
\newtheorem{corollary}{Corollary}

\newtheorem{lemma}{Lemma}

\newtheorem{definition}{Definition}

\newtheorem{remark}{Remark}

\makeatletter

\newcommand{\Rmnum}[1]{\expandafter\@slowromancap\romannumeral #1@}

\title{Resource Allocation with Midterm Review\footnote{I thank Yi Chen, Piotr Dworczak, Jeffrey Ely, Felix Zhiyu Feng, Thomas Gresik, Yingni Guo, Maciej Kotowski, and Alessandro Pavan for helpful comments.}}
\author{Yijun Liu\footnote{Department of Economics, University of Notre Dame, yliu2346@nd.edu.}}

\begin{document}
\maketitle

\begin{abstract}
    A principal funds an agent's project over two periods, with a midterm review providing a public signal about project quality. The agent privately knows the project's quality and always prefers more funding, while the principal wants to match funding to quality. A natural benchmark uses the review only after it arrives, to update beliefs and improve the second-period allocation. The optimal mechanism instead frontloads the review's information value: it also uses the review as an incentive device to elicit truthful reporting up front, thereby strictly improving the principal's first-period expected payoff. In the optimal mechanism, higher-quality projects receive more funding initially but face larger cuts after a bad review, leaving them with less funding than lower-quality projects. A good review increases funding, but the resulting allocation need not preserve the initial ranking of projects and eventually declines with quality.
\end{abstract}

\newpage

\section{Introduction}

Many resource-allocation relationships unfold over time and include a scheduled review. For instance, a research funding agency awards a multi-year grant and reviews progress at the midpoint. A firm funds an R\&D project and decides on future funding after reviewing early results. A government funds a contractor-run program and adjusts subsequent funding after an evaluation. These settings share three features. First, the resource recipient knows more about the project's quality than the resource provider. Second, the provider would like to tailor resources to quality---allocating more to higher-quality projects---while the recipient always prefers more resources regardless of quality: a researcher prefers a larger grant, and a project manager a larger budget, whether or not the project merits it. Third, monetary transfers are typically not used to align interests: the agency does not sell grants, and the project lead is not paid for honest reporting.

Because the recipient is biased, the provider cannot simply rely on the recipient to truthfully reveal the project's quality. Instead, the provider often turns to an additional source of information, such as an expert evaluation or a progress review. Such information, however, is usually unavailable at the outset. A project generates evidence of its quality only once it is running, and it cannot run without initial support; the agency must fund the grant before there is a progress report to evaluate. The provider must therefore commit resources while uninformed, learn something in the interim, and then decide on future support.

How should the resource provider use a midterm review? The natural answer is to wait for the review and adjust the allocation once the results arrive. This paper shows that such a strategy leaves value on the table. The optimal mechanism uses the review before it arrives: by committing to second-period allocations that depend jointly on the agent's initial report and on the review's outcome, the principal turns the review into an incentive instrument that elicits truthful information at the outset, when she is least informed. In this way, the principal exploits both the learning value and the incentive value of the review.

We model this as a two-period delegation problem---a mechanism-design problem without transfers. In each period, the principal chooses an action $x_t$, interpreted as the resources (funding) allocated to the agent's project. The project has a fixed quality $\theta$, privately known to the agent. The principal has a quadratic loss and wants to match allocation to quality in each period, whereas the agent's payoff is the total amount of resources he receives, so he always prefers more. Between the two periods, a public binary signal is realized, either high or low, with higher-quality projects more likely to generate the high signal. At the outset, the principal commits to a first-period allocation that depends on the agent's report and a second-period allocation that depends on both the report and the realized signal.

A natural benchmark uses the review information only after it arrives, treating the two periods as separate static allocation problems. We call this the \emph{public-information mechanism}. In a static problem, the only incentive-compatible mechanisms are constant allocations: every project type receives the same amount. Under quadratic loss, the optimal static allocation equals the project's expected quality: the prior mean in the first period and the relevant posterior mean in the second. This benchmark uses the review to improve subsequent decisions but does not use its anticipated arrival to elicit information earlier.

The anticipated arrival of the review can do more. Because projects of different quality generate a high signal with different probabilities, they value signal-contingent allocations differently. The principal can exploit these differences to screen the agent and induce truthful reporting in the first period. With a truthful report in hand, she can also improve the initial allocation. Indeed, Theorem~\ref{thm:frontloading-information} shows that the optimal mechanism frontloads the review's information value: it strictly improves the principal's first-period expected payoff relative to the public-information mechanism.

To understand this result, note that the agent cares only about total resources, so the principal can rebalance allocations across periods without changing reporting incentives. Optimality requires her to use this flexibility to equalize expected marginal losses across periods. Under quadratic loss, this implies that the expected resources allocated in each period are equal. This balancing condition then leads to the frontloading result.
The optimal mechanism also reverses the ranking
of the principal's expected payoffs across periods: under the
public-information benchmark, she does strictly better in the second period,
whereas under the optimal mechanism, she does strictly better in the
first. The frontloading result extends to a broader class of loss
functions, as discussed in Section \ref{sec:frontload}.

When project quality follows a Beta distribution, including the uniform distribution as a special case, and the signal probability is linear in quality, the optimal mechanism has a closed form that provides a clear geometric illustration of frontloading information value (Proposition~\ref{prop:beta-prior}). Specifically, all three allocation rules are linear in quality: the first-period allocation rotates the public-information benchmark toward the principal's ideal 45-degree line, at the cost of rotating both second-period allocation rules away from it. Outside this family, the optimal mechanism generally does not admit a closed-form solution. Nevertheless, we can characterize its qualitative structure through several general properties.

What does the optimal mechanism look like in general settings? Theorem~\ref{thm:general-monotonicity} shows that the first-period allocation $x_1^*(\theta)$ is increasing in project quality $\theta$: the initial allocation is responsive to quality, with higher-quality projects receiving more resources 
(one manifestation of the frontloaded information value). Making the first-period allocation responsive, however, is not free. Incentive compatibility requires the second-period allocations to become negatively responsive to quality, in a probability-weighted sense. Theorem~\ref{thm:general-monotonicity} also shows that the second-period allocation after a low signal, $x_{2l}^*(\theta)$, is decreasing in $\theta$, so higher-quality projects receive fewer resources after a bad review. The monotonicity of the allocation following a high signal, $x_{2h}^*(\theta)$, depends on the type distribution and the signal function. It need not be monotone over the entire support, although it cannot be increasing for types sufficiently close to the highest type.

Two forces shape the second-period allocations. The first is an \emph{incentive effect}. Incentive compatibility requires the gap between $x_{2h}^*(\theta)$ and $x_{2l}^*(\theta)$ to increase with project quality $\theta$. This widening gap pushes $x_{2h}^*(\theta)$ upward and $x_{2l}^*(\theta)$ downward.
The second is a \emph{surprise effect}, which rewards more for a more surprising good review and punishes more for a more surprising bad review.
A low signal is more surprising for higher-quality projects, so they face a larger punishment after a low signal. This force pushes $x_{2l}^*(\theta)$ downward. Conversely, a high signal is less surprising for higher-quality projects, so they receive a smaller reward after a high signal. This force also pushes $x_{2h}^*(\theta)$ downward. The two forces therefore work in the same direction for $x_{2l}^*(\theta)$, yielding a decreasing allocation after a low signal. For $x_{2h}^*(\theta)$, however, they work in opposite directions, so its monotonicity is uncertain and depends on the type distribution and the signal function.

 Nevertheless, we show that $x_{2h}^*(\theta)$ cannot be increasing for types near the top of the distribution. For the incentive effect to dominate the surprise effect in this region, the gap between $x_{2h}^*(\theta)$ and $x_{2l}^*(\theta)$ must grow sufficiently fast. But rapid growth is not optimal for sufficiently high types: there are almost no higher types left whose allocations and incentives can benefit from a further increase in the gap. For Lipschitz continuous allocation rules, $x_{2h}^*(\theta)$ is strictly decreasing for types sufficiently close to the top.

Finally, Theorem~\ref{thm:sandwich} shows that the optimal mechanism also satisfies a \emph{sandwich property}: for every project quality, the initial allocation lies between the two possible continuation allocations, i.e., $x_{2l}^*(\theta) \leq x_1^*(\theta) \leq x_{2h}^*(\theta).$ Thus, a good review increases resources relative to the initial allocation, whereas a bad review reduces them.
Combining this property with the monotonicity results gives a concrete picture of how the optimal mechanism works.
Projects that report higher quality receive more funding initially but face harsher consequences (lower funding) after a bad review. A good review increases funding, but the resulting allocation need not preserve the initial ranking of projects. For types near the top, they receive the highest initial allocations but also face strict continuation terms: a large reduction in funding following a bad review and a relatively small increase following a good review.

This paper connects the literature on dynamic delegation with the literature on mechanisms with ex post verification. Dynamic delegation studies allocation over time when only the agent learns about the state. Mechanisms with ex post verification or evidence allow the principal to learn about the state through hard information, but typically in one-shot interactions.
Our contribution is to combine these two ingredients and show how the anticipation of future information shapes optimal allocation both before and after that information arrives.
Section~\ref{sec:literature} discusses the related literature in detail.

The rest of the paper is organized as follows. Section~\ref{sec:model} presents the model, characterizes incentive compatibility, and introduces the public-information benchmark. 
Section~\ref{sec:example} solves the problem in closed form for Beta priors with a linear signal probability and provides a geometric illustration of frontloading information value. Section~\ref{sec:mainresults} presents the main results: it establishes frontloading, monotonicity, and the sandwich property for general priors and signal functions, illustrates these results with numerical examples, and interprets the optimal mechanism. Section~\ref{sec:conclusion} concludes. All proofs are provided in the Appendix.

\subsection{Related Literature}
\label{sec:literature}

This paper contributes to the literature on delegation, or mechanism design without transfers, including \cite{Holmstrom1980Theory}, \cite{MelumadandShibano1991}, \cite{Alonso2008}, \cite{KOESSLER20121850}, and \cite{AmadorBagwell2013}. In the canonical delegation problem, the agent's preferences are partially aligned with the principal's: the agent also prefers higher actions in higher states. This partial alignment allows the principal to benefit from the agent's private information in a static setting. In contrast, we consider a fully biased agent who cares only about the action. This extreme conflict of interest makes the static version of our problem trivial: a constant allocation rule. We therefore study a multi-period setting in which the principal can acquire additional information about the state in the interim.

This paper also relates to the dynamic delegation literature, including \cite{Guo2016}, \cite{frankel2016discounted}, \cite{KRAHMER2016849}, \cite{guo2021dynamic}, and \cite{chen2022dynamic}. This paper is most closely related to \cite{chen2022dynamic}, who also studies a dynamic delegation problem with a fully biased agent. In their setting, the state evolves over time and is observed only by the agent. The principal screens the agent by exploiting persistence in the state, rather than the anticipated arrival of public information as in our setting.

\cite{boleslavsky2016evolving} study a fully biased advisor who privately knows which of two distributions generates the states publicly observed over time. We instead consider a continuum of fixed project qualities and a binary public review between two allocation decisions. \cite{lipnowski2020repeated} study a delegation decision making problem in a repeated allocation game without commitment.

\cite{kattwinkel2022mechanisms} and \cite{kattwinkel2023costless} study delegation with fully biased agents in static settings. In the former, screening is restored by exploiting correlation across agents, whereas in the latter, it is restored through a combination of costly verification and a costless public signal. Relatedly, \cite{halac2020commitment} study optimal delegation when the principal can verify the agent’s private information at a cost.

This paper also relates to the literature on mechanisms with ex post verification, including work without transfers, such as \cite{GlazerRubinstein2004}, \cite{BenPorath&Dekel&Lipman2014}, \cite{Mylovanov&Zapechelnyuk2017}, and \cite{li2020mechanism}, as well as work with transfers, such as \cite{riordan1988optimal}, \cite{DeMarzo&Kremer&Skrzypacz2005}, \cite{Deb&Mishra2014}, and \cite{Ekmekci&Kos&Vohra2016}. Since the public signal in our model is hard information, this paper is also related to the literature on mechanism design with evidence, including \cite{Green&Laffont1986}, \cite{BULL200775}, \cite{DENECKERE2008487}, \cite{BENPORATH20121689}, and \cite{KARTIK2012323}.
Many of these papers have the agent submit evidence or partially verifiable reports, which are typically conclusive about what they certify. In contrast, our public signal is generated exogenously and provides only noisy information about project quality. Moreover, much of this literature studies a single allocation decision, whereas we study allocations both before and after the arrival of public information.

Finally, this paper is also related to the literature on dynamic moral hazard with interim performance information. \cite{cetemen2026midtermreview} study how a midterm review can provide incentives for future effort, while \cite{varas2020random} study optimal dynamic monitoring when information serves both learning and incentive purposes. \cite{ely2025feedback} consider the joint design of incentives and performance feedback, and \cite{chen2013interim} analyze a setting in which initial effort affects an interim outcome, which in turn affects the technology of the final outcome.
\cite{madsen2025performance} study monetary and non-monetary performance incentives in a moral hazard setting; they identify a related interaction between incentive requirements and success probabilities: as target effort increases, punishments strengthen, while rewards can be nonmonotone. In these papers, interim information is used to incentivize hidden effort. In contrast, in our setting the agent privately knows a fixed project quality, and the anticipated review can be used to elicit this information at the outset.

\section{Model}
\label{sec:model}

A principal (she) and an agent (he) interact over two periods, $t\in\{1,2\}$. In each period, the principal chooses an action $x_t\in\mathbb{R}$, interpreted as the amount of resources allocated to the agent's project.
The project has quality $\theta$, which is privately known to the agent and remains fixed across the two periods. The quality is drawn from a commonly known distribution  
 $F$ with density $f$ and support $\Theta\equiv[\underline{\theta},\bar{\theta}]$, where
$0\le\underline\theta<\bar\theta$. We assume that $f$ is continuous and $f(\theta)>0$ for all $\theta\in\Theta$.

After the first period, a midterm review is conducted, generating a public signal $\sigma$ correlated with the project quality. The signal is binary, taking values in $\mathcal{S}\equiv\{H,L\}$, where $H$ denotes a high signal and $L$ a low signal. The signal is generated according to the following conditional probabilities:
\[
    \Pr(\sigma=H\mid\theta)=s(\theta),
    \qquad
    \Pr(\sigma=L\mid\theta)=1-s(\theta),
\]
where $s:\Theta\to[0,1]$ is continuously differentiable with $s'(\theta)>0$ for all $\theta\in\Theta$. That is, a higher-quality project is more likely to generate a high signal.

The principal has quadratic loss in each period and seeks to match her actions to the project quality.\footnote{There are many settings in which the principal cares about the allocation in each period, rather than only about the total amount. For instance, a funding agency may want to ensure that a project receives sufficient resources throughout its duration, rather than concentrating funding in the early or late stages. Similarly, a firm may prefer to maintain a steady level of investment in an R\&D project to support continuous progress.}
Her payoff in period $t$ is $-(x_t-\theta)^2$, so her first-best allocation is $x_t=\theta$. In contrast, the agent is fully biased: his payoff in each period is simply the action taken, $x_t$. He therefore always prefers a higher action.

The principal has commitment power and commits to a mechanism at the start of the game. By the revelation principle, we restrict attention to direct mechanisms. A mechanism $\mathcal{M}$ is a triple
$\{x_1,x_{2h},x_{2l}\}$,
where $x_1,x_{2h},x_{2l}:\Theta\to\mathbb{R}$. Given a report $\hat{\theta}$, $x_1(\hat{\theta})$ is the first-period action, while $x_{2h}(\hat{\theta})$ and $x_{2l}(\hat{\theta})$ are the second-period actions following signals $H$ and $L$, respectively.\footnote{Because the agent's payoff is linear in the actions and the principal's loss is quadratic, it is without loss of generality to restrict attention to deterministic mechanisms.} For ease of exposition, we restrict attention to continuous action rules, $x_1,x_{2h},x_{2l}\in C(\Theta)$.\footnote{Continuity is imposed only to avoid null-set qualifications and to state uniqueness and properties of the optimal mechanism pointwise. If instead one permits Borel-measurable action rules with finite expected quadratic loss, the same optimality conclusions continue to hold, while uniqueness and the properties of the optimal mechanism are identified only up to $F$-null sets.} After observing his type, the agent decides whether
to participate. If he declines participation, his payoff is zero.

A type-$\theta$ agent's expected payoff under the mechanism is
\begin{equation*}
    u(\theta)
    \equiv
    x_1(\theta)
    +s(\theta)\,x_{2h}(\theta)
    +\bigl(1-s(\theta)\bigr)x_{2l}(\theta).
\end{equation*}
The principal's expected payoff is
\begin{equation*}
\begin{aligned}
    U_p(x)
    \equiv
    -\int_{\Theta}
    \Big[
        &\bigl(x_1(\theta)-\theta\bigr)^2
        +s(\theta)
        \bigl(x_{2h}(\theta)-\theta\bigr)^2
        +
        \bigl(1-s(\theta)\bigr)
        \bigl(x_{2l}(\theta)-\theta\bigr)^2
    \Big]\,dF(\theta).
\end{aligned}
\end{equation*}

\subsection{Principal's optimization problem}

A direct mechanism is implementable if it is incentive compatible and individually rational: truthful reporting is optimal for every type, and every type prefers participation to opting out.
\begin{definition}
    A mechanism $(x_1,x_{2h},x_{2l})$ is implementable if, for every $\theta \in \Theta$:
    \[
  \theta
    \in
    \arg\max_{\hat{\theta}\in\Theta}
    \left\{
        x_1(\hat{\theta})
        +s(\theta)x_{2h}(\hat{\theta})
        +\bigl(1-s(\theta)\bigr)x_{2l}(\hat{\theta})
    \right\}, \tag{IC}
\]
\[
    u(\theta) \geq 0. \tag{IR}
\]
    \end{definition}

Define the signal-contingent spread by $\Delta(\theta)
    \equiv
    x_{2h}(\theta)-x_{2l}(\theta)$. Incentive compatibility can then be characterized as follows.

    \begin{lemma}
\label{lem:IC}
    A mechanism $\{x_1,x_{2h},x_{2l}\}$ is incentive compatible if and only if the
following conditions hold for all $\theta\in\Theta$:
\begin{enumerate}[(i)]
    \item
    $\Delta(\theta)$ is increasing in $\theta$,\footnote{Throughout the paper, ``increasing'' and ``decreasing'' are in the weak sense unless explicitly qualified as ``strictly.''}

    \item $u(\theta)
        =
        u(\underline{\theta})
        +
        \int_{\underline{\theta}}^\theta
        s'(t)\Delta(t)\,dt.$
\end{enumerate}
Equivalently, condition {\rm (ii)} can be written as $ u'(\theta) = s'(\theta)\Delta(\theta)$
for every
$\theta\in\operatorname{int}(\Theta)$.
\end{lemma}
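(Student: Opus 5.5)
The plan is the standard envelope/Myerson argument with $s(\theta)$ playing the role of the type. Define the payoff a type-$\theta$ agent gets from reporting $\hat\theta$:
\[
U(\theta,\hat\theta)\equiv x_1(\hat\theta)+x_{2l}(\hat\theta)+s(\theta)\Delta(\hat\theta),
\]
which is affine in $s(\theta)$ with slope $\Delta(\hat\theta)$. Incentive compatibility then says $u(\theta)=\max_{\hat\theta}U(\theta,\hat\theta)$. So $u$, viewed as a function of $p=s(\theta)$, is a supremum of affine functions of $p$. Since $s$ is strictly increasing and $C^1$, I work in the variable $p$ and translate back via $s$.

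\emph{Necessity.} First I show that $\Delta$ is increasing. Write IC for type $\theta$ against report $\theta'$, and for type $\theta'$ against report $\theta$, and add the two inequalities. The $x_1$ and $x_{2l}$ terms cancel, leaving
\[
\bigl(s(\theta)-s(\theta')\bigr)\bigl(\Delta(\theta)-\Delta(\theta')\bigr)\ge 0.
\]
Because $s$ is strictly increasing, this gives (i).

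Next I derive the integral formula. The same pair of IC inequalities gives the two-sided bound
\[
\bigl(s(\theta')-s(\theta)\bigr)\Delta(\theta)\le u(\theta')-u(\theta)\le \bigl(s(\theta')-s(\theta)\bigr)\Delta(\theta').
\]
Since $\Delta$ is continuous (the action rules are in $C(\Theta)$), I divide by $\theta'-\theta$ and let $\theta'\to\theta$. This shows that $u$ is differentiable with $u'(\theta)=s'(\theta)\Delta(\theta)$ at every interior point. The bound also shows $u$ is Lipschitz, since $s$ is Lipschitz and $\Delta$ is bounded on the compact set $\Theta$. Hence $u$ is absolutely continuous, and integrating yields (ii). In fact $u'$ is continuous, so the ordinary fundamental theorem of calculus suffices. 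The equivalence of (ii) with the derivative form follows by the same remark, using continuity of $s'\Delta$.

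\emph{Sufficiency.} Assume (i) and (ii). Then
\[
u(\theta)-U(\theta,\hat\theta)=u(\theta)-u(\hat\theta)-\bigl(s(\theta)-s(\hat\theta)\bigr)\Delta(\hat\theta).
\]
Using (ii), this equals
\[
\int_{\hat\theta}^{\theta}s'(t)\bigl(\Delta(t)-\Delta(\hat\theta)\bigr)\,dt,
\]
because $\int_{\hat\theta}^{\theta}s'(t)\,dt=s(\theta)-s(\hat\theta)$. If $\theta>\hat\theta$, the integrand is nonnegative by (i) and $s'>0$. If $\theta<\hat\theta$, the orientation flips and the integrand is nonpositive, so the integral is again nonnegative. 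Either way the deviation is unprofitable, which establishes IC.

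The main obstacle, mild here, is the regularity step in necessity: justifying differentiability of $u$ and the integral representation without assuming differentiability of the allocation rules. The two-sided sandwich bound combined with continuity of $\Delta$ handles this directly. If continuity were dropped, I would instead use the Milgrom–Segal envelope theorem, with $u$ absolutely continuous as a sup of equi-Lipschitz functions, to get (ii) almost everywhere.
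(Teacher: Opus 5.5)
Your proposal is correct and follows essentially the same route as the paper: the two-sided IC bound $(s(\theta')-s(\theta))\Delta(\theta)\le u(\theta')-u(\theta)\le (s(\theta')-s(\theta))\Delta(\theta')$ gives monotonicity of $\Delta$ and, through continuity of $\Delta$, the derivative $u'=s'\Delta$ and hence (ii) by the fundamental theorem of calculus. Sufficiency uses the identical integral $\int_{\hat\theta}^{\theta}s'(t)\bigl[\Delta(t)-\Delta(\hat\theta)\bigr]\,dt\ge 0$; the only cosmetic differences are that you get (i) by adding the two IC inequalities rather than reading it off the sandwich, and that your Lipschitz/absolute-continuity remark is redundant once $u'$ is known to be continuous.
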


To interpret the lemma, the expected payoff of a type-$\theta$ agent who reports $\hat{\theta}$ is
\[
   x_1(\hat{\theta})+x_{2l}(\hat{\theta})
    +s(\theta)\Delta(\hat{\theta}).
\]
One can therefore view $s(\theta)$ as the agent's effective screening type and $\Delta$ as the allocation in a standard one-dimensional screening model. The agent's payoff is supermodular in $(s(\theta),\Delta)$. Since $s(\theta)$ is strictly increasing, incentive compatibility requires $\Delta$ to be increasing in $s(\theta)$, or equivalently in $\theta$. The second condition in Lemma~\ref{lem:IC} is the corresponding Envelope condition.

By Lemma~\ref{lem:IC}, the principal's problem can be
written as
\begin{align*}
    \max_{\substack{x_1,x_{2h},x_{2l}\in C(\Theta)}}
    \;&
    -\int_{\Theta}
    \Big[
        \bigl(x_1(\theta)-\theta\bigr)^2
        +
        s(\theta)
        \bigl(x_{2h}(\theta)-\theta\bigr)^2
        +
        \bigl(1-s(\theta)\bigr)
        \bigl(x_{2l}(\theta)-\theta\bigr)^2
    \Big]\,dF(\theta)
    \\[0.8em]
    \text{s.t.}\quad&
    \Delta(\theta)
    \text{ is increasing in $\theta$},
    \\
    &
     u(\theta)
    =
    u(\underline{\theta})
    +
    \int_{\underline{\theta}}^\theta
    s'(t)\Delta(t)\,dt, \\
    & u(\theta) \geq 0.
\end{align*}

Throughout the paper, we denote the optimal mechanism by $\{x^*_1(\theta), x^*_{2h}(\theta), x^*_{2l}(\theta)\}$, the agent's interim expected payoff under the optimal mechanism by $u^*(\theta)$, and the corresponding signal-contingent spread by $\Delta^*(\theta)$.

\subsection{The Public-Information Benchmark}

How should the principal use the midterm review information? A natural approach is to use it only after it arrives, in the second period. Under this approach, the principal treats each period as an independent static problem with the second-period type distribution updated to the posterior given the signal realization. We refer to the resulting mechanism as the \emph{public-information mechanism}, which serves as a benchmark for comparison with the optimal mechanism.

In a static setting, the principal cannot elicit information from a fully biased agent. Equivalently, the only incentive-compatible mechanism is a constant allocation rule. Under quadratic loss, the optimal constant allocation equals the project's expected quality. Formally, let $\{a_1, a_{2h}, a_{2l}\}$ denote the public-information mechanism. Then $a_1$ is the prior mean, while $a_{2h}$ and $a_{2l}$ are the posterior means following signals $H$ and $L$, respectively.\footnote{Since project quality is nonnegative,
$a_1$, $a_{2h}$, and $a_{2l}$ are strictly positive. So the public-information mechanism
satisfies both incentive compatibility and individual rationality.
} Their formulas are:
\[
    a_1
    \equiv
    \int_{\Theta}\theta\,dF(\theta), \quad a_{2h}
    \equiv
    \frac{
        \int_{\Theta}\theta s(\theta)\,dF(\theta)
    }{
         \int_{\Theta}s(\theta)\,dF(\theta)
    }, \quad   a_{2l}
        \equiv
        \frac{
            \int_{\Theta}\theta (1-s(\theta))\,dF(\theta)
        }{
            \int_{\Theta}(1-s(\theta))\,dF(\theta)
        }.
\]

In the following sections, we characterize the optimal mechanism, explain how it uses the midterm review information, and compare it with the public-information mechanism.

\section{Example}
\label{sec:example}

In this section, we consider a specific example with a uniform distribution and a linear signal function, for which the optimal mechanism admits a closed-form solution. We then show in Section~\ref{sec:mainresults} that, under general distributions and signal functions, the optimal mechanism shares several key features with this example.

Suppose the project quality $\theta$ is uniformly distributed on $[0,1]$, and the signal function is $s(\theta)=\theta$. Then the optimal mechanism is given by:
\[
    x_1^*(\theta)
    =
    \frac{2}{5}+\frac{1}{5}\theta,
    \qquad
    x_{2h}^*(\theta)
    =
    \frac{4}{5}-\frac{1}{5}\theta,
    \qquad
    x_{2l}^*(\theta)
    =
    \frac{2}{5}-\frac{1}{5}\theta.
\]

\begin{figure}[htbp]
    \centering
    \begin{tikzpicture}[
        x=7cm,
        y=7cm,
        >={Stealth[length=1.7mm]},
        adjustment/.style={->, thick},
        note/.style={font=\small, align=center}
    ]



        \draw[->]
            (0,0) -- (1.08,0) node[right] {$\theta$};
        \draw[->]
            (0,0) -- (0,1.08) node[left] {$x$};

        \node[below left] at (0,0) {$0$};

    \foreach \y/\lab in {
    0.2/\frac{1}{5},
    0.4/\frac{2}{5},
    0.6/\frac{3}{5},
    0.8/\frac{4}{5},
    1/1
} {
    \draw (-0.008,\y) -- (0.008,\y);
    \node[left=2pt] at (0,\y) {$\lab$};
}

        \foreach \t/\lab in {
            0.25/\frac14, 0.75/\frac34, 1/1
        } {
            \draw (\t,-0.008) -- (\t,0.008);
            \node[above=0.8pt] at (\t,0) {$\lab$};
        }


        \draw[blue, thick]
            (0,0.8) -- (1,0.6);

        \draw[black, thick]
            (0,0.4) -- (1,0.6);

        \draw[purple, thick]
            (0,0.4) -- (1,0.2);

        \node[blue, right] at (1,0.76)
            {$x^*_{2h}=\frac{4}{5}-\frac{1}{5}\theta$};

        \node[black, right] at (1,0.50)
            {$x^*_{1}=\frac{2}{5}+\frac{1}{5}\theta$};

        \node[purple, right] at (1,0.23)
            {$x^*_{2l}=\frac{2}{5}-\frac{1}{5}\theta$};


        \def\lowtheta{0.25}
        \def\hightheta{0.75}

        \def\gapshortstart{0.025}
        \def\gapshortend{0.025}
        \def\gaplongstart{0.048}
        \def\gaplongend{0.048}

        \pgfmathsetmacro{\xoneL}{0.4+0.2*\lowtheta}
        \pgfmathsetmacro{\xhighL}{0.8-0.2*\lowtheta}
        \pgfmathsetmacro{\xlowL}{0.4-0.2*\lowtheta}

        \pgfmathsetmacro{\xoneH}{0.4+0.2*\hightheta}
        \pgfmathsetmacro{\xhighH}{0.8-0.2*\hightheta}
        \pgfmathsetmacro{\xlowH}{0.4-0.2*\hightheta}

        \fill[black] (\lowtheta,\xoneL) circle[radius=1.5pt];
        \fill[black] (\hightheta,\xoneH) circle[radius=1.5pt];

        \draw (\lowtheta,0) -- (\lowtheta,-0.015);
        \draw (\hightheta,0) -- (\hightheta,-0.015);

        \draw[-Stealth, blue]
            (\lowtheta,\xoneL+\gaplongstart) -- (\lowtheta,\xhighL-\gaplongend);

        \draw[-Stealth, purple]
            (\lowtheta,\xoneL-\gapshortstart) -- (\lowtheta,\xlowL+\gapshortend);

        \draw[-Stealth, blue]
            (\hightheta,\xoneH+\gapshortstart) -- (\hightheta,\xhighH-\gapshortend);

        \draw[-Stealth, purple]
            (\hightheta,\xoneH-\gaplongstart) -- (\hightheta,\xlowH+\gaplongend);

        \node[note,anchor=north] at (0.25,-0.03)
            {low-quality project};

        \node[note,anchor=north] at (0.75,-0.03)
            {high-quality project};


        \node[note, blue, above] at (\lowtheta,0.78)
            {large\\increase};

        \node[note, purple, below] at (\lowtheta,0.32)
            {small\\punishment};

        \node[note, blue, above] at (\hightheta,0.68)
            {small\\increase};

        \node[note, purple, below] at (\hightheta,0.250)
            {large\\punishment};

    \end{tikzpicture}
    \caption{The optimal mechanism for the example}
    \label{fig:example}
\end{figure}

Figure~\ref{fig:example} illustrates the optimal mechanism in this example. The first-period allocation $x_1^*$ increases with project quality $\theta$, while both second-period allocations $x_{2h}^*$ and $x_{2l}^*$ decrease. Consider a low-quality project with $\theta=\frac{1}{4}$. It receives an initial allocation of $0.45$. Following a high signal, its allocation rises substantially to $0.75$; following a low signal, it falls only slightly to $0.35$. In contrast, a high-quality project with $\theta=\frac{3}{4}$ receives a larger initial allocation of $0.55$. Following a high signal, its allocation rises only slightly to $0.65$; following a low signal, it falls substantially to $0.25$. 

The principal thus has higher expectations for higher-quality projects and treats them more strictly: a high signal leads to a small increase in resources, whereas a low signal leads to a large reduction. Conversely, lower-quality projects receive fewer resources initially, but the principal has lower expectations for them and is therefore more lenient: they face a small reduction after a low signal but receive a substantial increase after a surprisingly high signal.

To see how the optimal mechanism uses the midterm review information, we compare it with the public-information mechanism in Figure~\ref{fig:comparison}.
The principal wants to match allocated resources to project quality, so her most preferred action is the 45-degree line.
Compared with the public-information mechanism, the optimal mechanism brings the first-period allocation closer to the most preferred action: $x^*_1$ is obtained by rotating $a_1$ toward the 45-degree line around the intersection point. 
Incentive compatibility then implies that the second-period allocations cannot both be increasing in project quality. The optimal mechanism makes $x^*_{2h}$ and $x^*_{2l}$ less aligned with the ideal action: they are obtained by rotating $a_{2h}$ and $a_{2l}$ away from the 45-degree line around the respective intersection points, by the same angle that $x^*_1$ is rotated toward it.

\begin{figure}[htbp]
    \centering
    \captionsetup[subfigure]{justification=centering}

    \begin{subfigure}[t]{0.49\textwidth}
        \centering
        \resizebox{\linewidth}{!}{%
        \begin{tikzpicture}[x=5.8cm,y=5.8cm]

            \draw[->] (0,0) -- (1.05,0)
                node[right] {$\theta$};
            \draw[->] (0,0) -- (0,1.05)
                node[left] {$x_1(\theta)$};

            \draw[dashed] (0,0) -- (1,1);

            \draw[thick] (0,0.5) -- (1,0.5);
            \node[left] at (0,0.5) {$  a_{1} = \frac{1}{2}$};

            \draw[thick,purple] (0,0.4) -- (1,0.6);

            \node[
                anchor=south east,
                fill=white,
                inner sep=1pt
            ] at (0.98,0.62)
                {$x^*_{1} = \frac{2}{5} + \frac{1}{5}\theta$};

            \draw[-Stealth,blue]
                (0.83,0.46) -- (0.83,0.555);

            \node[
                blue,
                font=\small,
                anchor=north east,
                fill=white,
                inner sep=1pt
            ] at (0.97,0.44)
                {make $x_1$ more aligned};

        \end{tikzpicture}%
        }

        \caption{First-period allocation}
        \label{fig:levering}
    \end{subfigure}%
    \hfill
    \begin{subfigure}[t]{0.49\textwidth}
        \centering
        \resizebox{\linewidth}{!}{%
        \begin{tikzpicture}[x=5.8cm,y=5.8cm]

            \draw[->] (0,0) -- (1.05,0)
                node[right] {$\theta$};
            \draw[->] (0,0) -- (0,1.05)
                node[left] {$x_2(\theta)$};

            \draw[dashed] (0,0) -- (1,1);

            \draw[thick] (0,0.67) -- (1,0.67);
            \node[left] at (0,0.67) {$a_{2h} = \frac{2}{3} $};

            \draw[thick] (0,0.33) -- (1,0.33);
            \node[left] at (0,0.33) {$a_{2l} = \frac{1}{3} $};

            \draw[thick,purple] (0,0.8) -- (1,0.6);

            \node[
                anchor=south west,
                fill=white,
                inner sep=1pt
            ] at (0.13,0.81)
                {$x^*_{2h} = \frac{4}{5} - \frac{1}{5}\theta$};

            \draw[thick,purple] (0,0.4) -- (1,0.2);

            \node[
                anchor=north east,
                fill=white,
                inner sep=1pt
            ] at (0.98,0.18)
                {$x^*_{2l} = \frac{2}{5} - \frac{1}{5}\theta$};

            \draw[-Stealth,blue]
                (0.11,0.69) -- (0.13,0.77);

            \draw[-Stealth,blue]
                (0.89,0.31) -- (0.87,0.23);


            \node[
                blue,
                align=center,
                font=\small,
                fill=white,
                inner sep=1.5pt
            ] at (0.69,0.47)
                {pay the cost of making\\
                 $x_1$ more aligned};

        \end{tikzpicture}%
        }

        \caption{Second-period allocations}
        \label{fig:payingcost}
    \end{subfigure}

    \caption{Comparison of the optimal mechanism and the public-information mechanism.}
    \label{fig:comparison}
\end{figure}

The optimal mechanism can therefore be interpreted as frontloading the review’s information value to the first period. Rather than waiting for the review and using it only to guide second-period decisions, the principal uses its anticipated arrival to improve the initial allocation. Relative to the public-information benchmark, she brings the first-period allocation closer to her ideal action, at the cost of making the second-period allocations less aligned with that ideal to preserve incentive compatibility. She achieves this by using the midterm review to induce truthful reporting in the first period, allowing her to benefit from the review before it arrives.

Why is it worthwhile to improve the first-period allocation at the cost of distorting the second-period allocations? The midterm review reduces uncertainty: the posterior variance following either signal is smaller than the prior variance. Under quadratic loss, the expected gain or loss from such a rotation is proportional to the corresponding variance. Since the posterior variances are smaller, the gain from bringing the first-period allocation closer to the ideal action outweighs the cost of distorting the second-period allocations. In fact, this logic holds for any distribution and signal function:  small rotations of this kind always locally improve the principal's expected payoff relative to the public-information mechanism, though the globally optimal mechanism need not take this form in general.

Another feature of the optimal mechanism is that 
\[
x^*_{2h}(\theta) \geq x^*_1 (\theta) \geq x^*_{2l}(\theta) \quad\text{for all }\theta \in \Theta.
\]
That is, relative to the first-period allocation, the optimal mechanism always rewards the agent after an $H$ signal and punishes the agent after an $L$ signal, though the magnitude of the reward or punishment depends on project quality.

 In fact, the entire family of Beta distributions admits closed-form solutions for the optimal mechanism, with the uniform distribution as the special case $\alpha=\beta=1$. As in the uniform case, all three allocation rules are linear in project quality, with slopes and intercepts determined by the parameters of the Beta distribution. Proposition~\ref{prop:beta-prior} formalizes this result.\footnote{Proposition~\ref{prop:beta-prior} allows the density to vanish or diverge at the endpoints, relaxing the maintained assumption that $f$ be continuous and strictly positive on $\Theta$. Its proof is by direct verification and does not rely on the general existence result.}
 Specifically, for a $\mathrm{Beta}(\alpha,\beta)$ distribution, $x_1^*$ has slope $\frac{1}{2(\alpha+\beta)+1}$, while $x_{2h}^*$ and $x_{2l}^*$ both have slope $-\frac{1}{2(\alpha+\beta)+1}$. The optimal mechanism can again be obtained by rotating the public-information mechanism around its intersection points with the 45-degree line.

\begin{proposition}[Optimal mechanism under Beta priors]\label{prop:beta-prior}
Let the signal probability be $s(\theta)=\theta$, and suppose that the type is distributed according to $\theta\sim\mathrm{Beta}(\alpha,\beta)$ with $\alpha,\beta>0$, i.e.
$f(\theta)=\theta^{\alpha-1}(1-\theta)^{\beta-1}/B(\alpha,\beta)$. Then the unique optimal mechanism is
\begin{equation}\label{eq:betamech}
x^*_1(\theta)=\frac{2\alpha+\theta}{2(\alpha+\beta)+1},\quad
x^*_{2h}(\theta)=\frac{2\alpha+2-\theta}{2(\alpha+\beta)+1},\quad
x^*_{2l}(\theta)=\frac{2\alpha-\theta}{2(\alpha+\beta)+1}.
\end{equation}
\end{proposition}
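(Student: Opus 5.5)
The plan is direct verification. The problem is a concave program: the objective is strictly concave and quadratic, and the constraints of Lemma~\ref{lem:IC} together with (IR) define a convex set. It therefore suffices to check that the mechanism in \eqref{eq:betamech} is feasible and that the first-order (linear) term of the objective vanishes along every feasible deviation. Throughout, write $D\equiv 2(\alpha+\beta)+1$ and $s(\theta)=\theta$, so $s'\equiv 1$.

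\textbf{Step 1 (feasibility).} The candidate's spread is $\Delta^*(\theta)=x^*_{2h}-x^*_{2l}=2/D$. It is constant, so condition (i) of Lemma~\ref{lem:IC} holds. A direct computation gives
\[
u^*(\theta)=x^*_1+\theta x^*_{2h}+(1-\theta)x^*_{2l}=\frac{4\alpha+2\theta}{D}.
\]
Hence $u^{*\prime}=2/D=s'\Delta^*$, which is the envelope condition (ii). Moreover $u^*>0$, so (IR) holds.

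\textbf{Step 2 (expansion around the candidate).} Take any feasible $(y_1,y_{2h},y_{2l})$ and set $h_j=y_j-x^*_j$, with $h_u\equiv h_1+\theta h_{2h}+(1-\theta)h_{2l}$ and $h_\Delta\equiv h_{2h}-h_{2l}$. Since both mechanisms satisfy the envelope condition, $h_u(\theta)=h_u(0)+\int_0^\theta h_\Delta$. Expanding the quadratic loss gives
\[
U_p(y)-U_p(x^*)=-\int\bigl[h_1^2+\theta h_{2h}^2+(1-\theta)h_{2l}^2\bigr]dF-2L,
\]
where
\[
L=\int\bigl[(x^*_1-\theta)h_1+\theta(x^*_{2h}-\theta)h_{2h}+(1-\theta)(x^*_{2l}-\theta)h_{2l}\bigr]dF.
\]
Let $c(\theta)\equiv x^*_1-\theta=2(\alpha-(\alpha+\beta)\theta)/D$. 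Then $x^*_{2l}-\theta=c-2\theta/D$ and $x^*_{2h}-\theta=x^*_{2l}-\theta+2/D$. Regrouping the integrand yields
\[
L=\int c\,h_u\,dF+\frac{2}{D}\int\theta(1-\theta)h_\Delta\,dF.
\]

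\textbf{Step 3 (the key cancellation).} The Beta structure makes $c f$ an exact derivative:
\[
c(\theta)f(\theta)=\frac{2}{D\,B(\alpha,\beta)}\frac{d}{d\theta}\bigl[\theta^\alpha(1-\theta)^\beta\bigr].
\]
Integrating by parts, and using that the boundary terms vanish because $\theta^\alpha(1-\theta)^\beta=0$ at both endpoints, gives
\[
\int c\,h_u\,dF=-\frac{2}{D}\int\theta^\alpha(1-\theta)^\beta h_\Delta\,\frac{d\theta}{B(\alpha,\beta)}=-\frac{2}{D}\int\theta(1-\theta)h_\Delta\,dF.
\]
Therefore $L=0$ for every deviation satisfying the envelope condition. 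It follows that $U_p(y)-U_p(x^*)\le 0$, with equality only if $h_1=h_{2h}=h_{2l}=0$ $F$-a.e. By continuity and $f>0$ on the interior, this means $h\equiv0$, which gives uniqueness. Notably, neither the monotonicity constraint nor (IR) binds; the candidate is optimal even in the relaxed problem with only the envelope constraint.

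\textbf{Main obstacle.} The main difficulty is justifying the integration by parts when $\alpha<1$ or $\beta<1$, since then $f$ diverges at an endpoint. I would handle this using continuity of the action rules on $[0,1]$. This makes $h_u$ and $h_\Delta$ bounded and $h_u$ absolutely continuous with derivative $h_\Delta$. Since $\theta^\alpha(1-\theta)^\beta$ is bounded and continuous and vanishes at $0$ and $1$, I would integrate by parts on $[\epsilon,1-\epsilon]$ and let $\epsilon\to0$ by dominated convergence. Integrability of all terms follows because $\theta^{\alpha-1}(1-\theta)^{\beta-1}$ is integrable for $\alpha,\beta>0$. This step does not rely on the general existence result, consistent with the paper's footnote.
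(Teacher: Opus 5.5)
Your proposal is correct and is essentially the paper's proof. The paper also expands the quadratic loss around the candidate and introduces $\lambda^*(\theta)=-\frac{4}{2(\alpha+\beta)+1}\theta(1-\theta)f(\theta)$, a constant multiple of your $\theta^\alpha(1-\theta)^\beta$ with $\lambda^{*\prime}=-2cf$; this lets it write the linear term as $\int(\lambda^{*\prime}z+\lambda^{*}z')\,d\theta=[\lambda^{*}z]_0^1=0$ and get uniqueness from the strictly positive quadratic remainder. The differences are cosmetic: the paper checks incentive compatibility by noting that every report yields the same payoff $(4\alpha+2\theta)/(2(\alpha+\beta)+1)$, and it justifies the integration by parts through absolute continuity of $\lambda^*$ rather than your $\epsilon$-truncation.
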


Note that when $\alpha<\frac{1}{2}$, $x_{2l}^*$ can be negative for sufficiently high-quality projects. Holding $\beta$ fixed, a smaller $\alpha$ places more probability mass on low-quality projects. In this case, the principal may claw back funding from projects that report high quality but receive a low signal. This severe punishment helps deter the large mass of low-quality projects from exaggerating their quality.

Hence, the properties of the optimal mechanism identified in the uniform case extend to all Beta distributions. Specifically, the first-period allocation is increasing in quality, both second-period allocations are decreasing, and the first-period allocation is sandwiched between them, so an $H$ signal always increases resources while an $L$ signal always reduces them.
Moreover, the optimal mechanism frontloads the midterm review's information value to the first period. In Section~\ref{sec:mainresults}, 
we show that many of these properties extend beyond the Beta family to general type distributions and signal functions.

\section{Properties of the Optimal Mechanism}
\label{sec:mainresults}

We now return to general type distributions and signal functions. Note that the principal's problem can be expressed as an optimal-control problem over the type space. Take the agent's 
interim expected payoff $u$, the first-period allocation $x_1$, and the 
signal-contingent spread $\Delta$ as the choice variables. The two second-period allocations can be recovered from these choice variables. By Lemma~\ref{lem:IC}, the principal's problem is equivalent to

\begin{align*}
\min_{u,x_1,\Delta}\quad
&\int_{\Theta}
\Big[
    \bigl(x_1(\theta)-\theta\bigr)^2
    +\bigl(u(\theta)-x_1(\theta)-\theta\bigr)^2
    +s(\theta)\bigl(1-s(\theta)\bigr)\Delta(\theta)^2
\Big]\,dF(\theta)\\
\text{s.t.}\quad
&u'(\theta)=s'(\theta)\Delta(\theta),\\
&\Delta \text{ is increasing on } \Theta,\\
&u(\theta)\geq 0 \quad  \forall \; \theta\in\Theta.
\end{align*}

Here $u$ is the state variable, while $x_1$ and $\Delta$ are controls.
The envelope condition governs the 
evolution of the state across types, and incentive compatibility additionally 
requires the spread to be increasing. 
We first establish existence and uniqueness.

\begin{lemma} \label{lem:existence}
    (Existence and uniqueness)
There exists a unique optimal mechanism.
\end{lemma}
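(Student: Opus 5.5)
Existence and uniqueness follow from convexity plus a compactness argument in a Hilbert space. I would work with the reformulated problem in $(u,x_1,\Delta)$. Equivalently, I would work directly with $(x_1,x_{2h},x_{2l})$ in the weighted space
\[
\mathcal{H}=L^2(F)\times L^2(s\,dF)\times L^2((1-s)\,dF).
\]
The objective is the squared distance from $(x_1,x_{2h},x_{2l})$ to $(\theta,\theta,\theta)$ in $\mathcal{H}$, so it is continuous and strictly convex.

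\textbf{Step 1: convexity of the feasible set.} By Lemma~\ref{lem:IC}, the feasible set $\mathcal{F}$ is defined by three constraints: $\Delta=x_{2h}-x_{2l}$ increasing, the envelope identity, and $u\ge 0$. All three are linear or cone constraints, so $\mathcal{F}$ is convex. If two distinct mechanisms were both optimal, they would differ on a set of positive $F$-measure, since they are continuous and $f>0$. Their midpoint would then be feasible and, by strict convexity, strictly better. This contradiction gives uniqueness.

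\textbf{Step 2: existence via a minimizing sequence.} I would take a minimizing sequence; the public-information mechanism shows the value is finite. The sequence is then bounded in $\mathcal{H}$, and since $s\in(0,1)$ a.e. the norms are comparable to $L^2(F)$ norms on compact subintervals of the interior.

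\textbf{Step 3: compactness through the monotone spread.} The key compactness comes from $\Delta_n$. The sequence $\Delta_n$ is monotone and bounded in $L^2$ with weight $s(1-s)f$, so it is uniformly bounded on compact subintervals of the interior. Helly's selection theorem then gives a pointwise convergent subsequence with a monotone limit $\Delta$.

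\textbf{Step 4: passing the constraints to the limit.} I would extract further weakly convergent subsequences of $x_{1,n}$ and $u_n$ in $L^2(F)$. Since $u_n'=s'\Delta_n$, the $u_n$ converge, up to constants, locally uniformly, so the envelope identity passes to the limit. The level $u_n(\underline\theta)$ is bounded because $u_n$ is bounded in $L^2$ and $\Delta_n$ is controlled. The constraint $u\ge0$ is preserved under these limits. Weak lower semicontinuity of the convex objective then shows the limit is optimal.

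\textbf{Main obstacle: continuity and endpoint behavior.} The limit need only be measurable, so I must show the optimizer can be taken continuous, as the paper restricts to $C(\Theta)$. I would handle this through the first-order conditions. For fixed $u$ and $\Delta$, the pointwise optimal $x_1$ is
\[
x_1=\tfrac{1}{2}u,
\]
which comes from minimizing $(x_1-\theta)^2+(u-x_1-\theta)^2$. So $x_1$ inherits continuity from $u$, which is absolutely continuous.

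For $\Delta$, monotonicity allows jumps. I would show jumps are suboptimal. The Lagrangian in $\Delta$ has strictly convex integrand $s(1-s)\Delta^2 f$ plus a term linear in $\Delta$ through $u$ with a continuous multiplier. The optimal $\Delta$ is therefore the ironed, monotone projection of a continuous function, and such a projection is continuous.

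I expect this continuity step, together with controlling the endpoints where $s(1-s)$ may vanish, to be the hardest part. I would first attempt it via the ironing and projection characterization rather than by direct estimates.
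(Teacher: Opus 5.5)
Your uniqueness argument (convex feasible set, strictly convex weighted-$L^2$ objective, midpoint, then continuity and $f>0$) is the paper's. The existence argument, however, has a genuine gap at the point you yourself call the ``main obstacle,'' and the sentence in Step~4 that is meant to handle the levels is not correct.

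\textbf{Why the endpoint levels are not controlled.} Helly's theorem does give you a monotone limit on $\operatorname{int}(\Theta)$. But a loss bound yields only $\int s(1-s)\Delta_n^2\,dF\le C$. Near an endpoint where $s(1-s)$ vanishes, this does not control $\int s'|\Delta_n|\,d\theta$, and hence does not control the level of $u_n$ there.

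For instance, take $s(\theta)=\theta$ and $F$ uniform on $[0,1]$. Let $\Delta$ equal $-1/(\theta\log(1/\theta))$ on $(0,1/e]$ and $-e$ afterwards. This spread is increasing and has finite weighted cost, but it is not integrable. Its truncations $\max\{\Delta,-n\}$, with $x_1=m=u/2$ and $u_n(1)>0$ held fixed, form a feasible bounded-loss sequence. Along it, $u_n$ stays bounded in $L^2(F)$, yet $u_n(0)\to\infty$.

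So ``$u_n(\underline\theta)$ is bounded'' cannot follow from the loss bound alone. Your limit may have $u$ and $\Delta$ unbounded at an endpoint, and so need not lie in $C(\Theta)$.

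\textbf{Why ironing does not repair it.} Linearize the level term, which is quadratic rather than linear in $\Delta$, through the costate $\lambda(\theta)=\int_\theta^{\bar\theta}(u^*(t)-2t)\,dF(t)$. Then $\Delta^*$ is the monotone projection of $-\lambda/(2s(1-s)f)$. This gives continuity on the interior. At an endpoint with $s(1-s)=0$, however, the target has the form $0/0$, and its boundedness is exactly what remains to be shown. You would also need to show that IR is slack, since otherwise its multiplier enters $\lambda$.

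\textbf{The missing ingredient.} What is needed is an optimality-based bound on slopes all the way to the boundary. The paper works with $\mathcal J(U)$ over convex $U$ of $y=s(\theta)$, writing $w=y(1-y)g$ and letting $\underline g,\bar g$ be the bounds on the density $g$ of $y$. It shows that two operations weakly lower the loss along any minimizing sequence.
\begin{enumerate}[(i)]
\item Replace $U_n$ by $\max\{U_n,2\underline\theta\}$. This also makes IR slack, so it can be dropped.
\item Cap $q_n=U_n'$ at $\pm K$, anchoring at the midpoint $a$, where the loss bound and convexity already force both one-sided slopes into $[-K_0,K_0]$ with $K_0<K$. Call the capped slope $\widehat q_n$. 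The spread term falls by at least $2K\int w|q_n-\widehat q_n|\,dy$. By Fubini, the level term rises by at most $2(\bar\theta-\underline\theta)\int(U_n-\widehat U_n)g\,dy\le \frac{2(\bar\theta-\underline\theta)\bar g}{\underline g\min\{a,1-a\}}\int w|q_n-\widehat q_n|\,dy$. The reason is that the mass of types affected by a cap near an endpoint vanishes at the same linear rate as $w$.
\end{enumerate}
Together these give a uniformly bounded, uniformly Lipschitz minimizing sequence, and Arzel\`a--Ascoli yields a convex Lipschitz minimizer.

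Continuity of $\Delta^*$ then needs no first-order conditions. Replacing a kink at $r$ by a chord on $[r-\varepsilon,r+\varepsilon]$ saves $\tfrac{\varepsilon}{2}w(r)$ times the squared jump in the spread term, at only an $O(\varepsilon^2)$ cost in the level term. A bounded increasing $U^{*\prime}$ then extends continuously to the endpoints. Hence $U^*\in C^1$ and the induced mechanism is admissible.
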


Beyond special cases such as Proposition \ref{prop:beta-prior}, a closed-form solution is
generally unavailable even when the monotonicity and participation
constraints are omitted. The relaxed problem still requires solving
a second-order differential equation whose coefficients depend on
the type distribution and signal function.\footnote{After minimizing
pointwise over $x_1$ and using the envelope condition, the
Euler--Lagrange condition for the relaxed problem is
\[
\frac{d}{d\theta}
\left[
    \frac{2s(\theta)(1-s(\theta))f(\theta)}
         {(s'(\theta))^2}
    u'(\theta)
\right]
=
f(\theta)\bigl(u(\theta)-2\theta\bigr),
\]
together with the corresponding endpoint conditions.}
Nevertheless, we can characterize its qualitative structure through several general properties.
We first show that the mechanism frontloads the review's
information value, and then characterize how allocations vary
with project quality and respond to the review outcome.

\subsection{Frontloading information value}
\label{sec:frontload}

In the example above, we used the term ``frontloading information value'' to describe the phenomenon that, relative to the public-information mechanism, the optimal mechanism brings the first-period allocation closer to the principal's ideal action while moving the second-period allocations farther away. This process effectively shifts part of the review’s value from the second period to the first. More generally, frontloading means that the anticipated review allows the principal to obtain a higher payoff already in the first period, before the review arrives. We now formalize this idea.

rather than first-best allocation

\begin{definition}
    \label{def:levering}
    A mechanism frontloads information value if the principal's first-period expected payoff is strictly higher than under the public-information mechanism.
\end{definition}

We define frontloading in terms of the principal's expected payoff rather than the first-period allocation itself. The reason is that two allocation rules need not be pointwise comparable: one may be closer to the principal's ideal action for some types but further away for others. Expected payoff provides a natural aggregate measure of the quality of the first-period allocation.

\begin{theorem}
\label{thm:frontloading-information}
The optimal mechanism frontloads information value.
\end{theorem}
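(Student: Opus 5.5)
The plan is to compare the optimal mechanism's first-period expected loss, $L_1^*\equiv \int_\Theta (x_1^*(\theta)-\theta)^2\,dF(\theta)$, with the public-information first-period loss, which is the prior variance $V\equiv\int_\Theta(\theta-a_1)^2\,dF(\theta)$. I will show $L_1^*<V$ by combining three facts. First, a pointwise balancing condition makes the first-period loss at most half of the total loss. Second, the public-information mechanism is strictly suboptimal. Third, the review strictly reduces expected posterior variance.

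\emph{Step 1 (intertemporal balancing).} For any continuous $g$, replace $(x_1,x_{2h},x_{2l})$ by $(x_1+g,\;x_{2h}-g,\;x_{2l}-g)$. This leaves $u(\theta)$ and $\Delta(\theta)$ unchanged, so by Lemma~\ref{lem:IC} both (IC) and (IR) are preserved.

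Write $\bar x_2(\theta)\equiv s(\theta)x_{2h}(\theta)+(1-s(\theta))x_{2l}(\theta)$. By the variance decomposition used in the optimal-control formulation, the principal's loss is
\[
\int_\Theta\Big[(x_1-\theta)^2+(\bar x_2-\theta)^2+s(1-s)\Delta^2\Big]\,dF .
\]
Under the perturbation the third term does not move, and $x_1+\bar x_2=u$ is fixed. Minimizing pointwise over the split of $u$ therefore forces $x_1^*(\theta)=\bar x_2^*(\theta)=u^*(\theta)/2$ for all $\theta$. Any other split could be strictly improved by a suitable continuous $g$, for instance $g=u^*/2-x_1^*$.

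It follows that the second-period loss equals $L_1^*+\int s(1-s)(\Delta^*)^2\,dF\ge L_1^*$. Hence $L_1^*\le L^*/2$, where $L^*$ denotes the optimal total loss.

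\emph{Step 2 (the benchmark is strictly suboptimal).} The public-information mechanism is feasible, with total loss $L^{pub}=V+\mathbb{E}[\mathrm{Var}(\theta\mid\sigma)]$. It violates the balancing condition of Step 1. Its first-period allocation is the constant $a_1$, while its $\bar x_2(\theta)=a_{2l}+s(\theta)(a_{2h}-a_{2l})$ is strictly increasing, because $s'>0$ and $a_{2h}>a_{2l}$. The latter inequality holds since $s$ strictly increasing makes $\theta$ and $s(\theta)$ strictly positively correlated under $F$. Therefore the perturbation of Step 1 strictly improves on the benchmark, and $L^*<L^{pub}$.

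\emph{Step 3 (conclusion).} By the law of total variance, $\mathbb{E}[\mathrm{Var}(\theta\mid\sigma)]=V-\mathrm{Var}(\mathbb{E}[\theta\mid\sigma])<V$, since $a_{2h}\ne a_{2l}$ and both signals occur with positive probability. Chaining the steps,
\[
L_1^*\le \tfrac12 L^*<\tfrac12 L^{pub}<\tfrac12(V+V)=V,
\]
which is exactly strict frontloading in the sense of Definition~\ref{def:levering}.

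The main delicacy is Step 1. One must check that the rebalancing perturbation stays in $C(\Theta)$ and keeps finite loss, and that pointwise optimality, rather than just equality in expectation, follows. Existence from Lemma~\ref{lem:existence} guarantees that $u^*$ is well defined and continuous, so $g=u^*/2-x_1^*$ is admissible, and the strict decrease in loss under this perturbation gives the pointwise identity.
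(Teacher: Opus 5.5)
Your proposal is correct and follows essentially the same route as the paper: the balancing condition $x_1^*=m^*=u^*/2$, the variance decomposition giving $L_2^*\ge L_1^*$, feasibility of the public-information mechanism, and the law of total variance giving $R_2<R_1$. Your Step 2, showing that the benchmark is strictly suboptimal, is valid but redundant. The strict inequality from Step 3 already yields $L_1^*<R_1$, which is why the paper uses only the weak comparison $L_1^*+L_2^*\le R_1+R_2$.
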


Let $R_1$ and $L_1^*$ denote the principal's first-period expected loss under the public-information mechanism and the optimal mechanism, respectively:
\[
    R_1
    \equiv
    \int_{\Theta}
    \bigl(a_1-\theta\bigr)^2\,dF(\theta)
    \quad\text{and}\quad
    L_1^*
    \equiv
    \int_{\Theta}
    \bigl(x_1^*(\theta)-\theta\bigr)^2\,dF(\theta).
\]
Since the principal's expected payoff is the negative of her expected loss, Theorem \ref{thm:frontloading-information} is equivalent to $L_1^* < R_1$. We provide a sketch of the proof here; the full proof is in the Appendix.

We first show that, for every project quality, the optimal mechanism allocates the same expected amount of resources in the two periods.
Define $ m^*(\theta)
    \equiv
    s(\theta)x^*_{2h}(\theta)
    +
    \bigl(1-s(\theta)\bigr)x^*_{2l}(\theta)$ as the expected second-period allocation for a type-$\theta$ project under the optimal mechanism. We show that $x_1^*(\theta) = m^*(\theta)$ for all $\theta \in \Theta$. 
    To see this, consider the perturbation $\{x^*_1 - \epsilon \phi, x^*_{2h} + \epsilon \phi, x^*_{2l} + \epsilon \phi\}$, where $\epsilon \in \mathbb{R}$ and $\phi: \Theta \to \mathbb{R}$ is any continuous function. This perturbation leaves the agent's expected payoff from every report unchanged and therefore preserves incentive compatibility and individual rationality. The principal's expected payoff is strictly concave in $\epsilon$, so optimality requires the first-order condition at $\epsilon = 0$:
\[
    \int_{\Theta}
    \left(
        x_1^*(\theta)
        -
        m^*(\theta)
    \right)
    \phi(\theta)\,dF(\theta)
    = 0
\]
for every continuous function $\phi$. Therefore, we have 
\[
    x_1^*(\theta) = m^*(\theta) \quad\text{for all }\theta \in \Theta, \tag{balancing condition}
\]
which we refer to as the \emph{balancing condition}. It states that, under quadratic loss, the optimal division of resources across periods requires allocating the same expected amount in each period.\footnote{A related equalization of intertemporal distortions appears
in \citet{chen2022dynamic}. Here, because project quality
is fixed, the balancing condition equates expected allocations
across the two periods for each type.}

Let $R_2$ and $L_2^*$ denote the principal's second-period expected loss under the public-information mechanism and the optimal mechanism, respectively. For each type $\theta$, the principal's second-period expected loss under the optimal mechanism can be written as
\begin{align*}
    \bigl(m^*(\theta)-\theta\bigr)^2 + s(\theta)\bigl(1-s(\theta)\bigr) \Delta^*(\theta)^2 
  = \bigl(x_1^*(\theta)-\theta\bigr)^2 + s(\theta)\bigl(1-s(\theta)\bigr) \Delta^*(\theta)^2. 
\end{align*}
It follows that
\begin{equation}\label{eq:losscomparison}
        L_2^* = L_1^*
    +
    \mathbb{E} \left[  s(\theta)\bigl(1-s(\theta)\bigr) \Delta^*(\theta)^2 \right] \geq L_1^*.
\end{equation}
Thus, under the optimal mechanism, the principal's expected loss is weakly higher in the second period than in the first. The balancing condition equalizes the mean allocation across periods, but the second-period allocation still fluctuates around this mean with the review signal, generating additional variance. The term $\mathbb{E} \left[  s(\theta)\bigl(1-s(\theta)\bigr) \Delta^*(\theta)^2 \right]$  captures precisely this cost of signal-contingent variation.

Moreover, $R_2 < R_1$, since the public-information mechanism uses the midterm review information only in the second period, reducing the principal's expected loss relative to the first period. Since the public-information mechanism is feasible,
\[
2L_1^* \leq L_1^* + L_2^* \leq R_1 + R_2 < 2R_1
\]
and therefore $L_1^* < R_1$, which implies that the optimal mechanism frontloads the midterm review's information value.

The Appendix also establishes that $\Delta^* \not\equiv 0$, so the inequality in \eqref{eq:losscomparison} is strict. Thus, the optimal mechanism reverses the ranking of the principal’s expected payoffs across periods. Under the public-information benchmark, the principal obtains a higher expected payoff in the second period, when the review information becomes available. Under the optimal mechanism, she instead obtains a strictly higher expected payoff in the first period.

To understand the economic logic behind frontloading, recall that the agent cares only about total resources, whereas the principal cares about how those resources are divided across periods, since she aims to match resources to project quality in each period. The principal therefore prefers to shift allocations across periods to balance marginal losses. The midterm review makes this possible by providing an instrument to elicit information from the agent. 
Equalizing expected marginal losses does not, by itself, establish the required comparison of expected loss levels; that step also depends on the loss function. Quadratic loss makes the connection particularly transparent through the variance decomposition above. In fact, the frontloading results extend beyond quadratic loss, as stated in Remark \ref{rmk:generallossfunctions} and generalized further in the Appendix.\footnote{For a continuously differentiable, strictly convex loss function $l$, define $\Gamma = l \circ (l')^{-1}$, which gives the loss level associated with a given marginal loss. When $\Gamma$ is convex, Jensen's inequality
implies that marginal-loss equalization yields $L_1^* \leq L_2^*$.
Proposition~\ref{prop:generallossfunctions} in the
Appendix formalizes this sufficient condition for frontloading and also
considers losses that are concave functions of absolute error.}

\begin{remark}
\label{rmk:generallossfunctions}

For loss functions of the form $l(x,\theta)=|x-\theta|^p$, the optimal mechanism frontloads information value when $p>1$. When $0<p\leq 1$, there exists an optimal mechanism that frontloads information value.

\end{remark}

In the example above, the optimal mechanism frontloads information value by rotating the public-information mechanism. 
For general distributions and signal functions, however, the optimal mechanism need not take this form. In the next subsection, we establish several properties of the optimal mechanism that shed light on its general structure.

\subsection{Monotonicity and the sandwich property}

In this section, we first establish the monotonicity properties of the optimal allocation rules in Theorem~\ref{thm:general-monotonicity}, and then characterize the relationship between the first-period and second-period allocations in Theorem~\ref{thm:sandwich}, as well as boundary conditions in Proposition \ref{prop:boundaryconditions}. Finally, we combine these results to illustrate the structure of the optimal mechanism through examples.

\begin{theorem}
\label{thm:general-monotonicity}
The optimal mechanism has the following properties:
\begin{enumerate}[(i)]
    \item $x_1^*$ is increasing on $\Theta$.
    \item $x_{2l}^*$ is decreasing on $\Theta$.
    \item $x_{2h}^*$ cannot be increasing on all of $\Theta$.
At any type $\theta$ where $\Delta^*$ is differentiable,
     \[
    {x_{2h}^*}'(\theta)
    =
    -\frac{1}{2}\,
    s'(\theta)\,\Delta^*(\theta)
    +
    \bigl(1-s(\theta)\bigr)\,{\Delta^*}'(\theta).
    \]
    If, in addition, the allocation rules are Lipschitz continuous, then there exists $\theta_0<\bar{\theta}$ such that $x_{2h}^*$ is strictly decreasing on $[\theta_0,\bar{\theta}]$.
\end{enumerate}
\end{theorem}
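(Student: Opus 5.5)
The plan is to reduce everything to the state variable $u$ through the balancing condition established in the proof sketch of Theorem~\ref{thm:frontloading-information}. Since $x_1^*=m^*=x_{2l}^*+s\Delta^*$, we get $u^*=x_1^*+m^*=2x_1^*$. This gives the identities
\[
x_1^*=\tfrac12 u^*,\qquad x_{2l}^*=\tfrac12 u^*-s\Delta^*,\qquad x_{2h}^*=\tfrac12 u^*+(1-s)\Delta^*.
\]
By Lemma~\ref{lem:IC}, $u^*$ is absolutely continuous with $u^{*\prime}=s'\Delta^*$. Wherever $\Delta^*$ is differentiable, differentiating the identities yields
\[
{x_{2l}^*}'=-\tfrac12 s'\Delta^*-s{\Delta^*}',\qquad {x_{2h}^*}'=-\tfrac12 s'\Delta^*+(1-s){\Delta^*}'.
\]
The second of these is exactly the formula in (iii). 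Because $\Delta^*$ is monotone, I will state the monotonicity claims through the Stieltjes version of these identities, $dx_{2l}^*=-\tfrac12 s'\Delta^*d\theta-s\,d\Delta^*$, so that no differentiability is needed. With this, (i) and (ii) both follow once I show $\Delta^*\ge 0$ on $\Theta$, since $d\Delta^*\ge 0$ by Lemma~\ref{lem:IC}(i).

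\textbf{Key step: $\Delta^*\ge 0$.} I argue by contradiction. Since $\Delta^*$ is increasing, if it is negative anywhere it is negative on an initial interval $[\underline\theta,c)$ and nonnegative on $[c,\bar\theta]$ (take $c=\bar\theta$ if it is negative everywhere). On $[\underline\theta,c)$, $u^*$ is strictly decreasing, so the first-period action $u^*/2$ moves against the target $\theta$.

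\begin{itemize}
\item First candidate: the pooled mechanism with $\tilde\Delta=\max(\Delta^*,0)$ and $\tilde u=u^*(c)$ on $[\underline\theta,c)$, with $\tilde u=u^*$ elsewhere. This is feasible: $\tilde\Delta$ is increasing, and $\tilde u\ge u^*(c)\ge 0$ gives IR. It strictly lowers the variance term $s(1-s)\Delta^2$.
\item The level term $2(u/2-\theta)^2$ might rise for types $\theta<c$ with $u^*(c)/2<\theta$. To control this, I would compare against a family of pooled candidates: flatten at a level $k\ge u^*(c)$ and shift $u$ up by a constant on $[c,\bar\theta]$.
\item I would then use the first-order condition in the constant shift $u^*+\epsilon$, which is feasible upward. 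It gives $\int(u^*/2-\theta)\,dF\ge 0$, with equality if IR is slack.
\item The aim is to show that the decreasing segment is dominated. Replacing $u^*$ on $[\underline\theta,c)$ by its mirror image about a suitable constant, together with a shift, should reduce $\int_{\underline\theta}^{c}(u/2-\theta)^2dF$. This relies on rearrangement: among functions with the same distribution of values, pairing them increasingly with $\theta$ minimizes the quadratic loss.
\end{itemize}

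I expect this step to be the main obstacle, because the monotone rearrangement must be reconciled with the envelope constraint at $c$ and with IR. Uniqueness from Lemma~\ref{lem:existence} then makes the strict improvement a contradiction.

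\textbf{Part (iii).} If $x_{2h}^*$ were increasing on all of $\Theta$, then using $x_{2h}^*=x_1^*+(1-s)\Delta^*$ and $\Delta^*\not\equiv 0$ (shown in the Appendix), I would build the perturbation $\Delta^\epsilon=\min(\Delta^*,\Delta^*(\theta_1))$ near the top, with $u$ adjusted through the envelope condition. The idea is that, at the upper end, raising the spread only adds variance cost $s(1-s)\Delta^2$ and level distortion, while helping almost no higher types. I would show that this truncation is a strict improvement unless ${\Delta^*}'$ is small near $\bar\theta$.

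For the Lipschitz statement, I would use the transversality condition: $u(\bar\theta)$ is free, so the costate vanishes at $\bar\theta$. Concretely, the Euler--Lagrange expression $\frac{2s(1-s)f}{s'^2}u'$ must tend to $0$ as $\theta\to\bar\theta$ wherever the monotonicity constraint is slack, and the pooling region is handled by ironing. Given that $\Delta^*(\bar\theta)>0$ and $\Delta^*$ is Lipschitz, this should force $(1-s){\Delta^*}'$ to be small relative to $\tfrac12 s'\Delta^*$ on some $[\theta_0,\bar\theta]$. Hence ${x_{2h}^*}'<0$ almost everywhere there, so $x_{2h}^*$ is strictly decreasing on $[\theta_0,\bar\theta]$. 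The case $s(\bar\theta)=1$ makes $(1-s)\to 0$ automatically and is immediate.
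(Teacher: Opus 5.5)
Your reduction matches the paper: $u^*=2x_1^*$ from the balancing condition, the identities $x_{2l}^*=\tfrac12u^*-s\Delta^*$ and $x_{2h}^*=\tfrac12u^*+(1-s)\Delta^*$, the derivative formula, and the Stieltjes form of (ii). Parts (i)--(ii) do reduce to $\Delta^*\ge 0$, but that step is not established.

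\textbf{The gap in $\Delta^*\ge0$.} Flattening at $u^*(c)$ works only when $u^*(c)\ge 2c$, and your remedies for the other cases fail:
\begin{enumerate}[(a)]
\item A constant upward shift on $[c,\bar\theta]$ changes the level loss there in a way the single condition $\int(u^*/2-\theta)\,dF\ge0$ cannot control.
\item Reflecting the decreasing piece about a constant turns the spread into $-\Delta^*$. That spread is decreasing, which violates Lemma~\ref{lem:IC}(i).
\item The increasing rearrangement puts $u^*(\underline\theta)$ just left of $c$. Then $u$ jumps at $c$ and the envelope condition fails.
\end{enumerate}
The missing idea is in Proposition~\ref{prop:nonnegative-spread}, which the paper's proof simply invokes. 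Choose the flat level \emph{between} $u^*$ and the target $2\theta$, and extend the flat piece to the \emph{right} of $c$ instead of shifting anything. Since $u^*(\theta)-2\theta$ is strictly decreasing on $[\underline\theta,c]$:
\begin{enumerate}[(a)]
\item take $k=u^*(c)$ if $u^*(c)\ge2c$;
\item take $k=u^*(\underline\theta)$ if $u^*(\underline\theta)\le2\underline\theta$;
\item otherwise take $k=u^*(\theta_0)=2\theta_0$ at the crossing point.
\end{enumerate}
Set $u\equiv k$ on $[\underline\theta,r]$, where $r=\sup\{\theta\ge c:u^*(\theta)\le k\}$, and keep $u=u^*$ beyond $r$. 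This works for the following reasons.
\begin{enumerate}[(a)]
\item Wherever $u$ changes, $k$ lies between $u^*(\theta)$ and $2\theta$, so the level loss falls pointwise.
\item The variance term vanishes on $[\underline\theta,r]$ and strictly falls where $\Delta^*<0$.
\item The new spread ($0$, then $\Delta^*\ge0$) is increasing, so IC holds.
\item $k\ge\min u^*\ge0$, so IR holds.
\end{enumerate}
Smoothing the kink at $r$ then gives an implementable continuous mechanism with strictly lower loss, a contradiction.

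\textbf{Part (iii), case $s(\bar\theta)<1$.} Your truncation $\min\{\Delta^*,\Delta^*(\theta_1)\}$ is the right perturbation here. Let $h=\Delta^*-\Delta^*(\theta_1)\ge0$ be the removed slope. Per unit of $h$, the first-order level cost is $O(\bar\theta-\theta_1)$. The variance saving is at least a positive multiple of $s(1-s)\Delta^*(\theta_1)$, using $\Delta^*(\bar\theta)>0$, which follows from $\Delta^*\not\equiv0$. Optimality therefore forces $h\equiv0$: $\Delta^*$ is \emph{constant} on a terminal interval, and ${x_{2h}^*}'=-\tfrac12s'\Delta^*<0$ there, with no Lipschitz assumption needed.

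\textbf{Part (iii), case $s(\bar\theta)=1$.} Here the weight $s(1-s)$ vanishes at the top, so the saving no longer dominates. Your plan then does not show that $x_{2h}^*$ fails to be increasing. You need a direct argument. Since $x_{2h}^*(\bar\theta)=u^*(\bar\theta)/2$ and $\int_\theta^{\bar\theta}s'=1-s(\theta)$,
\[
\frac{x_{2h}^*(\theta)-x_{2h}^*(\bar\theta)}{1-s(\theta)}
=\Delta^*(\theta)-\frac{1}{2\bigl(1-s(\theta)\bigr)}\int_{\theta}^{\bar\theta}s'(t)\Delta^*(t)\,dt
\;\longrightarrow\;\tfrac12\Delta^*(\bar\theta)>0
\]
as $\theta\uparrow\bar\theta$. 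Hence $x_{2h}^*(\theta)>x_{2h}^*(\bar\theta)$ near the top.

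\textbf{The Lipschitz claim.} Your transversality argument is only heuristic, and it misdescribes the case $s(\bar\theta)<1$. There, the free-endpoint condition with a slack constraint would require $2s(1-s)f\Delta^*/s'=0$ at $\bar\theta$, which is false. So the monotonicity constraint must bind at the top. Showing that it binds on a whole terminal interval is exactly what the perturbation above delivers. Your claim that this ``should force $(1-s){\Delta^*}'$ to be small'' is not derived. Your treatment of $s(\bar\theta)=1$ under Lipschitz continuity is fine and matches the paper.
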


Theorem~\ref{thm:general-monotonicity} shows that, as in the example, the optimal mechanism frontloads  information value so that the first-period allocation is more aligned with the principal's objective: higher-quality projects receive more resources. Incentive compatibility then implies an intertemporal trade-off between the slopes of the first-period and second-period allocations. To see this, consider differentiable allocation rules, incentive compatibility requires
\[
s(\theta){x_{2h}^*}'(\theta) + \bigl(1-s(\theta)\bigr){x_{2l}^*}'(\theta) = -{x_1^*}'(\theta).
\]
The slopes reflect how responsive the allocations are to the agent's type. Since the optimal mechanism makes the first-period allocation responsive to type, it must offset this by making the second-period allocations negatively responsive in the probability-weighted sense.\footnote{Related contrarian allocation responses arise in
\citet{chen2022dynamic}. In our setting, $x^*_{2l}$ decreases with reported quality, while
$x^*_{1}$ increases. Thus, negatively responsive continuation allocations help support a positively responsive initial allocation.}  Specifically, the optimal mechanism makes $x_{2l}^*$ decreasing in type, while the monotonicity of $x_{2h}^*$ is ambiguous because it is shaped by two opposing forces.

To understand these monotonicity properties and the forces behind them, suppose for now that $\Delta^*(\theta)\geq 0$ on $\Theta$. This means that the optimal mechanism always allocates more resources after an $H$ signal than after an $L$ signal: $x^*_{2h}(\theta)\geq x^*_{2l}(\theta)$ for all $\theta\in\Theta$. While this may seem natural, it is not obvious that the optimal mechanism must have this property. We take it as given for now and establish it formally in Proposition~\ref{prop:nonnegative-spread}, with a sketch of the proof later in this section.

In Section \ref{sec:frontload}, we established the balancing condition, which implies that the first-period allocation equals exactly half of the agent's interim expected payoff: $x_1^*(\theta) = \frac{1}{2}u^*(\theta)$. The envelope condition in Lemma~\ref{lem:IC} gives $u'(\theta) = s'(\theta)\Delta(\theta)$, so
\begin{equation}
    \label{eq:x1}
    {x_{1}^*}'(\theta) = \frac{1}{2}{u^*}'(\theta) = \frac{1}{2}s'(\theta)\Delta^*(\theta) \geq 0.
\end{equation}
Thus, $x_1^*$ is increasing because the agent's interim expected payoff is increasing in type. Intuitively, higher-quality projects are more likely to generate an $H$ signal, so incentive compatibility implies that they earn larger interim expected payoffs, which, through the balancing condition, translates into larger first-period allocations.

For the second-period allocations, two forces are at play. They work in the same direction for $x_{2l}^*$, making it decreasing, but in opposite directions for $x_{2h}^*$, leaving its monotonicity ambiguous. Specifically, the balancing condition implies
\begin{equation*}
    x_{2l}^*(\theta) = x_1^*(\theta) - s(\theta)\Delta^*(\theta).
\end{equation*}
Differentiating this identity and using \eqref{eq:x1} yields the derivative of $x_{2l}^*$:
\[
{x_{2l}^*}'(\theta)
=
\underbrace{
    -\frac{1}{2}s'(\theta)\Delta^*(\theta)
}_{\text{surprise effect}\leq 0}
\;
\underbrace{
    \vphantom{-\frac{1}{2}s'(\theta)\Delta^*(\theta)}
    -s(\theta){\Delta^*}'(\theta)
}_{\text{incentive effect}\leq 0}
\leq 0.
\]
The derivative consists of two terms. The first term captures the \emph{surprise effect}: holding the spread $\Delta^*(\theta)$ fixed, an $L$ signal is less likely---and therefore more surprising---for higher-quality projects, so they should be punished more after an $L$ signal. This force pushes $x_{2l}^*$ downward as project quality increases. The second term captures the \emph{incentive effect}: holding the signal probability $s(\theta)$ fixed, incentive compatibility requires the spread $\Delta^*(\theta)$ to increase with type. A wider spread  pushes $x_{2h}^*$ upwards and $x_{2l}^*$ downwards as $\theta$ increases. Thus, these two forces work in the same direction for $x_{2l}^*$, making the allocation rule decreasing.

Similarly, the derivative of $x_{2h}^*$ is given by:
\[
{x_{2h}^*}'(\theta)
=
\underbrace{
    -\frac{1}{2}s'(\theta)\Delta^*(\theta)
}_{\text{surprise effect}\leq 0}
\; +
\underbrace{
    \vphantom{-\frac{1}{2}s'(\theta)\Delta^*(\theta)}
    \bigl(1-s(\theta)\bigr){\Delta^*}'(\theta)
}_{\text{incentive effect}\geq 0}.
\]

The first term captures the same surprise effect as for $x_{2l}^*$: holding the spread $\Delta^*(\theta)$ fixed, an $H$ signal is more likely---and therefore less surprising---for higher-quality projects, so they should receive a smaller reward after an $H$ signal. This force pushes $x_{2h}^*$ downward as project quality increases.  
The second term again captures the incentive effect. Incentive compatibility requires the gap between $x_{2h}^*$ and $x_{2l}^*$ to widen with project quality, which pushes $x_{2h}^*$ upward. Thus, these two forces work in opposite directions for $x_{2h}^*$, leaving its monotonicity ambiguous in general.

Nonetheless, we show that $x_{2h}^*$ cannot be increasing near the highest type. To see the intuition, note that the term $1-s(\theta)$ in the incentive effect is decreasing in $\theta$, so for the incentive effect to dominate the surprise effect and make $x_{2h}^*$ increasing, the spread $\Delta^*(\theta)$ would have to grow sufficiently rapidly. Such rapid growth, however, is not optimal for high enough types, because there are almost no higher types left whose allocations and incentives can benefit from further widening of the spread.\footnote{Specifically, we show that the principal can strictly improve her payoff by replacing a rapidly growing spread with a constant spread for sufficiently high types.} 
This logic is reminiscent of no distortion at the top in standard mechanism design, as there are no higher types left to screen. When the allocation rules are Lipschitz continuous, the derivative ${\Delta^*}'(\theta)$ is bounded, then we can show that there exists a terminal interval of types on which $x_{2h}^*$ is strictly decreasing. In this sense, the analogue of no distortion at the top in our setting concerns the growth of the spread rather than the first-best allocation for the highest type.

The derivative formula for $x_{2h}^*$ in Theorem~\ref{thm:general-monotonicity} depends on the endogenous object $\Delta^*(\theta)$, which is determined by the optimal mechanism.  In general, it is difficult to derive sufficient conditions for the monotonicity of $x_{2h}^*$ directly from primitives without imposing specific functional forms on the type distribution and signal function. 
This is because the sign of ${x_{2h}^*}'(\theta)$ generally depends on the global solution for the optimal spread, not only on local quantities such as $f(\theta)$, $s(\theta)$, and their derivatives.\footnote{Determining the optimal spread requires solving a differential equation involving the type distribution and the signal function.}  Nonetheless, the derivative formula is useful for understanding the two counteracting forces behind the monotonicity of $x_{2h}^*$.

The uniform distribution---and, more generally, the Beta family---is a special case in which the optimal spread $\Delta^*(\theta)$ is constant. Hence ${\Delta^*}'(\theta)=0$, and the incentive effect disappears. Only the surprise effect remains, implying that both $x_{2l}^*$ and $x_{2h}^*$ are decreasing in $\theta$.

Next, we sketch the proof that the optimal spread $\Delta^*(\theta)$ is nonnegative (Proposition~\ref{prop:nonnegative-spread}). This result will also imply the next property of the optimal mechanism, the \emph{sandwich property}: $x_{2l}^* \leq x_1^* \leq x_{2h}^*$ on $\Theta$.  The full proof is provided in the Appendix.

\begin{proposition}[Nonnegative optimal spread] $\Delta^*(\theta)\geq0 \; \text{ for all }\theta\in\Theta.$
\label{prop:nonnegative-spread}
\end{proposition}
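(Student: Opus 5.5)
The plan is to argue by contradiction with a ``flattening'' perturbation in the reduced control problem. By the balancing condition $x_1^*=\tfrac12 u^*$, the principal's loss at type $\theta$ reduces to
\[
\tfrac12\bigl(u(\theta)-2\theta\bigr)^2+s(\theta)\bigl(1-s(\theta)\bigr)\Delta(\theta)^2,
\]
subject to $u'=s'\Delta$, $\Delta$ increasing, and $u\ge 0$. Throughout I would work with this reduced objective, taking Lemma~\ref{lem:existence} and the balancing condition as given.

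Suppose $\Delta^*(\theta)<0$ for some $\theta$. Because $\Delta^*$ is increasing, the set where it is negative is an initial interval $[\underline\theta,c)$ with $c\le\bar\theta$. On this interval $u^*$ is strictly decreasing, while $2\theta$ is increasing, so $u^*(\theta)-2\theta$ is strictly decreasing there. The candidate mechanism is
\[
\tilde\Delta=\max(\Delta^*,0),\qquad
\tilde u(\theta)=u^*(c)\ \text{on } [\underline\theta,c],\qquad
\tilde u=u^*\ \text{on } [c,\bar\theta],
\]
with $\tilde x_1=\tfrac12\tilde u$. The second-period allocations are then recovered from $\tilde u$, $\tilde x_1$ and $\tilde\Delta$. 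This candidate is feasible:
\begin{itemize}
\item $\tilde\Delta$ is increasing, since it is $0$ on the initial interval and equals $\Delta^*\ge 0$ afterwards;
\item the envelope condition holds by construction;
\item IR holds, because $\tilde u\ge\min_\theta u^*\ge 0$ (as $u^*$ is decreasing then increasing, its minimum is $u^*(c)$).
\end{itemize}
The variance term $s(1-s)\Delta^2$ strictly falls on a set of positive measure. Here I use $f>0$ and that $s(1-s)>0$ on a subinterval, which needs $s\in(0,1)$ off endpoints; the boundary cases $s=0$ or $1$ must be handled separately.

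The main obstacle is the level term. On $[\underline\theta,c]$ we replace $u^*(\theta)\ge u^*(c)$ by $u^*(c)$. This moves $u$ toward $2\theta$ whenever $u^*(c)\ge 2\theta$, but it can overshoot downward where $u^*(c)<2\theta$. So a pure flattening need not lower the level loss. I would first try a two-parameter family: flatten to a constant $k$ on $[\underline\theta,c]$ and shift the tail to $u^*+(k-u^*(c))$ on $[c,\bar\theta]$. This leaves $\tilde\Delta$ unchanged and keeps the mechanism feasible for $k\ge u^*(c)$ (and for slightly smaller $k$ if IR is slack). Optimizing over $k$ gives a level loss no larger than at the best constant shift of $u^*$ itself.

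Optimality of $u^*$ implies the first-order condition for adding constants, namely $\int_\Theta (u^*-2\theta)\,dF=0$ when IR is slack (with an inequality if it binds). The key comparison I would aim for is
\[
\int_{\underline\theta}^{c}\bigl(k-2\theta\bigr)^2\,dF+\int_c^{\bar\theta}\bigl(u^*+k-u^*(c)-2\theta\bigr)^2\,dF
\;\le\;\int_\Theta\bigl(u^*-2\theta\bigr)^2\,dF
\]
for the optimal $k$. This would use that on the initial interval $u^*-2\theta$ is decreasing, so replacing it by a constant is a mean-preserving contraction of the error once $k$ is chosen to match the conditional mean. Combining this with the strict drop in the variance term contradicts the uniqueness and optimality in Lemma~\ref{lem:existence}.

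If the global comparison is hard to close, my fallback is to make the perturbation local: flatten only on a short interval $[\underline\theta,\underline\theta+\eta]$ near the bottom, where $\Delta^*$ is most negative. I would then compare first-order effects in $\eta$, using continuity of $f$, $s$ and $s'$.
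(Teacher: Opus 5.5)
\textbf{Verdict: genuine gap.} The level-term comparison, which you yourself identify as the main obstacle, is never established, and your two-parameter family cannot establish it in general.

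\emph{Why the family fails.} The ``mean-preserving contraction of the error'' heuristic does not apply here. Replacing $u^*$ by $k$ on $[\underline\theta,c]$ turns the error $u^*-2\theta$ into $k-2\theta$, which is not constant. The uniform tail shift $k-u^*(c)$ then couples the two regions. In fact, under exactly the hypotheses you plan to use, the inequality can fail. Those hypotheses are: convex $u$, strictly decreasing on $[\underline\theta,c]$, with $\int(u-2\theta)\,dF=0$ and IR slack.

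Here is a concrete case. Take $F$ uniform on $[0,1]$ and $s(\theta)=\theta$. Let $u=u_0+\epsilon(c-\theta)$ on $[0,c]$, smoothed at $c$, followed by a convex increasing tail chosen so that $\int_0^1(u-2\theta)\,d\theta=0$. Such a tail exists whenever $u_0<1$. For the best shift in your family, the level term $\tfrac12\int(u-2\theta)^2\,dF$ changes by
\[
\frac{\epsilon c^2}{2}\Bigl(\frac{2c}{3}-u_0\Bigr)+O(\epsilon^2),
\]
while the variance term falls only by $O(\epsilon^2)$. So if $0<u_0<2c/3$ and $\epsilon$ is small, every member of the family is strictly worse. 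The mechanism behind this: $u^*$ sits below $2\theta$ on the decreasing segment and the tail sits above it. Flattening pushes the segment further below target, and the compensating upward shift pushes the tail further above.

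\emph{Why the fallback fails.} The local fallback is infeasible. A flat piece on $[\underline\theta,\underline\theta+\eta]$ followed by the original negative slopes makes $\Delta$ jump from $0$ back to negative values, which violates monotonicity. Any flattening must cover the entire set where $\Delta^*<0$.

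\emph{The missing idea.} Leave the tail untouched, and choose the flat level so that it lies \emph{pointwise} between $u^*$ and $2\theta$ wherever it replaces $u^*$. This removes the need for any global comparison. Since $u^*-2\theta$ is strictly decreasing on $[\underline\theta,c]$, set:
\begin{itemize}
\item $k=u^*(c)$ if $u^*(c)\ge 2c$;
\item $k=u^*(\underline\theta)$ if $u^*(\underline\theta)\le 2\underline\theta$;
\item otherwise $k=u^*(\theta_0)=2\theta_0$, where $\theta_0$ is the crossing point.
\end{itemize}
Then extend the flat piece \emph{past} the minimizer to $r=\sup\{\theta\ge c:u^*(\theta)\le k\}$. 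Set $\widehat u=k$ on $[\underline\theta,r]$ and $\widehat u=u^*$ afterwards.

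This $\widehat u$ is convex in $y=s(\theta)$ and satisfies IR, because $k$ is a value of $u^*$. On $[c,r]$ you have $u^*\le k\le 2\theta$, so both the level term and the variance term are pointwise weakly smaller. The variance term is strictly smaller where $\Delta^*<0$. The price is a possible kink at $r$. The paper removes it by mollifying $\widehat U'$ and passing to the limit via dominated convergence.

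A minor point: your caveat about $s\in\{0,1\}$ is unnecessary. Strict monotonicity of $s$ already gives $s(1-s)>0$ on the interior of $\Theta$.
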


In this model, the signal probability $s(\theta)$ serves as the effective screening type. In particular, two types with the same signal probability are observationally equivalent from the perspective of incentive provision and therefore must be treated identically. We thus use $y \equiv s(\theta)$ as the screening type and define the inverse function $\psi(y) \equiv s^{-1}(y)$. The agent's interim expected payoff can then be expressed as a function of the screening type $y$:
\[
u(\theta) = u(\psi(y)) \equiv U(y).
\]
The incentive compatibility condition in Lemma \ref{lem:IC} gives $u'(\theta) = s'(\theta)\Delta(\theta)$. Combining this with the chain rule yields
\begin{align*}
    U'(y) =  u'(\psi(y))\psi'(y)  = s'(\psi(y))\Delta(\psi(y))\frac{1}{s'(\psi(y))} = \Delta(\psi(y)),
\end{align*}
so that
\begin{equation}
    \label{eq:Uprime}
    U'(y) = \Delta(\psi(y)).
\end{equation}
Incentive compatibility requires the spread $\Delta$ to be increasing. Since $\psi$ is  strictly increasing, it follows that $U$ is convex in $y$.

Using the balancing condition, the principal's expected loss can be rewritten as
\begin{equation}
\label{eq:reduced-objective-intuition}
    \mathcal{J}(U)
    =  \int_{\underline{y}}^{\bar{y}}
    \left[
        \frac{1}{2}
        \bigl(
            U(y)-2\psi(y)
        \bigr)^2
        +
        y(1-y)\bigl(U'(y)\bigr)^2
    \right]
    g(y)\,dy,
\end{equation}
where $\underline{y}=s(\underline{\theta})$, $\bar{y}=s(\bar{\theta})$, and $g(y)$ denotes the density of the screening type $y$. In the first term inside the brackets, $2\psi(y)=2\theta$ is the principal's ideal total allocation across the two periods. Hence, $ \frac{1}{2}\big(U(y)-2\psi(y)\big)^2$ measures the gap between the ideal and the actual expected total allocation.

Now suppose that the optimal mechanism has $\Delta^*(\theta)<0$ at some type. By continuity, there exists an interval on which $\Delta^*(\theta)<0$. Equation~\eqref{eq:Uprime} then implies that $U^*(y)$ is decreasing on that interval. We can then construct an alternative mechanism by replacing this decreasing segment of $U^*(y)$ with a flat segment. This modification strictly increases the principal's expected payoff. Figure~\ref{fig:nonnegative-spread} illustrates the argument.

\begin{figure}[htbp]

\centering

\begin{tikzpicture}

\begin{groupplot}[
    group style={
        group size=2 by 1,
        horizontal sep=2.50cm
    },
    width=0.41\textwidth,
    height=0.31\textwidth,
    xmin=0,
    xmax=5,
    ymin=0.55,
    ymax=2.55,
    axis lines=left,
    axis line style={semithick},
    tick style={draw=none},
    xtick=\empty,
    ytick=\empty,
    clip=false,
    xlabel={$y$},
    xlabel style={
        at={(axis description cs:1.05,0)},
        anchor=north
    },
    title style={
        at={(axis description cs:0.5,-0.47)},
        anchor=south,
        font=\small
    },
    every axis label/.style={font=\small},
]


\nextgroupplot[
    title={(a) A convex candidate with a negative spread}
]

\addplot[
    black,
    dashed,
    thick,
    domain=0:5,
    samples=2
]
{0.65+0.35*x}
node[
    pos=1.01,
    right,
    font=\small
]
{$2\psi(y)$};

\addplot[
    blue!70!black,
    very thick,
    domain=0:5,
    samples=180
]
{0.28*(x-2.3)^2+0.9}
node[
    pos=0.035,
    right,
    font=\small
]
{$U^*(y)$};


\nextgroupplot[
    title={(b) A profitable flattening}
]

\addplot[
    black,
    dashed,
    thick,
    domain=0:5,
    samples=2
]
{0.65+0.35*x}
node[
    pos=1.01,
    right,
    font=\small
]
{$2\psi(y)$};

\addplot[
    name path=Uoriginal,
    gray!75,
    dashed,
    thick,
    domain=0:5,
    samples=180
]
{0.28*(x-2.3)^2+0.9}
node[
    pos=0.035,
    right,
    font=\small,
    text=gray!75
]
{$U^*(y)$};

\addplot[
    name path=Uflat,
    blue!70!black,
    very thick,
    domain=0:3.2154,
    samples=2
]
{1.13462};

\addplot[
    blue!70!black,
    very thick,
    domain=3.2154:5,
    samples=100
]
{0.28*(x-2.3)^2+0.9}
node[
    pos=0.5,
    below right,
    font=\small
]
{$\widehat U(y)$};

\addplot[
    blue!18
]
fill between[
    of=Uoriginal and Uflat,
    soft clip={domain=0:3.2154}
];

\addplot[
    only marks,
    mark=*,
    mark size=1.4pt,
    blue!70!black
]
coordinates {(3.2154,1.13462)};

\end{groupplot}

\draw[
    -{Latex[length=2.4mm]},
    very thick
]
($(group c1r1.east)+(0.50,0)$)
--
($(group c2r1.west)+(-0.50,0)$)
node[
    midway,
    above,
    font=\small
]
{flatten};

\end{tikzpicture}

\caption{
Illustration of the nonnegative-spread argument.
}
\label{fig:nonnegative-spread}

\end{figure}

In Figure~\ref{fig:nonnegative-spread}, the dashed line represents the principal's ideal total allocation $2\psi(y)$, which is increasing in $y$. Panel~(a) shows a convex candidate $U^*(y)$ with a decreasing segment, corresponding to a negative spread $\Delta^*(\theta)<0$. Panel~(b) shows a profitable modification of this candidate, denoted by $\widehat U(y)$. 
Since $U^*(y)$ is convex, the decreasing segment must occur before the minimum. We then replace this decreasing segment, together with any subsequent increasing part needed to preserve convexity, with a flat segment.
The height of the flat segment is chosen so that it lies between the original $U^*(y)$ and the target $2\psi(y)$ over the modified region. The new candidate $\widehat U(y)$ therefore satisfies
\[
    \bigl(
        \widehat U(y)-2\psi(y)
    \bigr)^2
    \leq
    \bigl(
        U^*(y)-2\psi(y)
    \bigr)^2,
\]
so the first term in \eqref{eq:reduced-objective-intuition} is reduced.

Moreover, the second term also decreases on the modified region, it becomes zero since $\widehat U'(y)=0$ there.
The modification thus strictly increases the principal's expected payoff. 
Since $\widehat{U}$ remains convex, it is still incentive compatible.\footnote{Although $\widehat{U}$ may have a kink where the flat segment joins the original function, it can be approximated arbitrarily closely by continuously differentiable convex functions, and the strict improvement survives this approximation.} This contradicts the optimality of $U^*(y)$. Therefore, the optimal mechanism cannot have $\Delta^*(\theta)<0$ at any type.

Under truthful reporting, the balancing condition gives the allocation process a two-period martingale property: the initial allocation equals  the expected continuation allocation. Together with the nonnegative optimal spread established in Proposition \ref{prop:nonnegative-spread}, it implies that $x_1^*$ lies above $x_{2l}^*$ and below $x_{2h}^*$. We refer to this ordering as the \emph{sandwich property}.

\begin{theorem}\label{thm:sandwich}
    (Sandwich Property) In the optimal mechanism,
    \[
        x_{2l}^*(\theta)
    \leq
    x_1^*(\theta)
    \leq
    x_{2h}^*(\theta)
    \qquad
    \text{for every }\theta\in\Theta.
    \]
\end{theorem}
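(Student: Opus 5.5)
The sandwich property follows from combining the balancing condition established in the proof sketch of Theorem~\ref{thm:frontloading-information} with Proposition~\ref{prop:nonnegative-spread}. The plan is to express both continuation allocations in terms of $x_1^*$ and the spread $\Delta^*$, and then read off the ordering from the sign of $\Delta^*$.

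First, invoke the balancing condition. For every $\theta\in\Theta$ it states that $x_1^*(\theta)=m^*(\theta)=s(\theta)x_{2h}^*(\theta)+(1-s(\theta))x_{2l}^*(\theta)$. Holding pointwise (not just almost everywhere) relies on continuity of the allocation rules. The first-order condition $\int_\Theta (x_1^*-m^*)\phi\,dF=0$ for all continuous $\phi$ forces the continuous function $x_1^*-m^*$ to vanish wherever $f>0$. Since $f>0$ on all of $\Theta$, the identity holds everywhere.

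Second, substitute $x_{2h}^*=x_{2l}^*+\Delta^*$ into the balancing condition. This gives the two identities
\[
x_1^*(\theta)-x_{2l}^*(\theta)=s(\theta)\Delta^*(\theta),\qquad
x_{2h}^*(\theta)-x_1^*(\theta)=\bigl(1-s(\theta)\bigr)\Delta^*(\theta).
\]
Since $s(\theta)\in[0,1]$, both coefficients are nonnegative. Proposition~\ref{prop:nonnegative-spread} gives $\Delta^*(\theta)\ge 0$ for every $\theta$, so both right-hand sides are nonnegative, which yields $x_{2l}^*\le x_1^*\le x_{2h}^*$ pointwise.

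The only substantive ingredient is Proposition~\ref{prop:nonnegative-spread}, which we take as given. The flattening argument behind it is the real obstacle: it requires keeping $\widehat U$ convex and $C^1$-approximable while strictly lowering $\mathcal{J}$. Within this proof itself, the only care needed is the pointwise validity of the balancing condition. That follows from the perturbation $\{x_1^*-\epsilon\phi,\,x_{2h}^*+\epsilon\phi,\,x_{2l}^*+\epsilon\phi\}$ preserving IC and IR, and from uniqueness of the optimum (Lemma~\ref{lem:existence}), so that the statement concerns the optimal mechanism itself rather than a version of it.
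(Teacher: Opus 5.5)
Your proposal is correct and follows essentially the paper's route: the balancing condition gives $x_1^*-x_{2l}^*=s\Delta^*$ and $x_{2h}^*-x_1^*=(1-s)\Delta^*$, and Proposition~\ref{prop:nonnegative-spread} supplies $\Delta^*\ge 0$. The only minor difference is that you obtain balancing pointwise from the first-order condition of the main-text perturbation, whereas the appendix gets it by pointwise completing the square in the proof of Proposition~\ref{prop:nonnegative-spread}; either works.
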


The sandwich property is consistent with the intuition that a high signal is always good news and a low signal is always bad news for the agent's allocation.

So far, we have established the monotonicity and sandwich properties of the optimal mechanism, which together provide a qualitative characterization of its structure. We next consider two examples to illustrate how these properties jointly shape the optimal mechanism.

\begin{figure}[htbp]
  \centering
  \includegraphics[width=\textwidth]{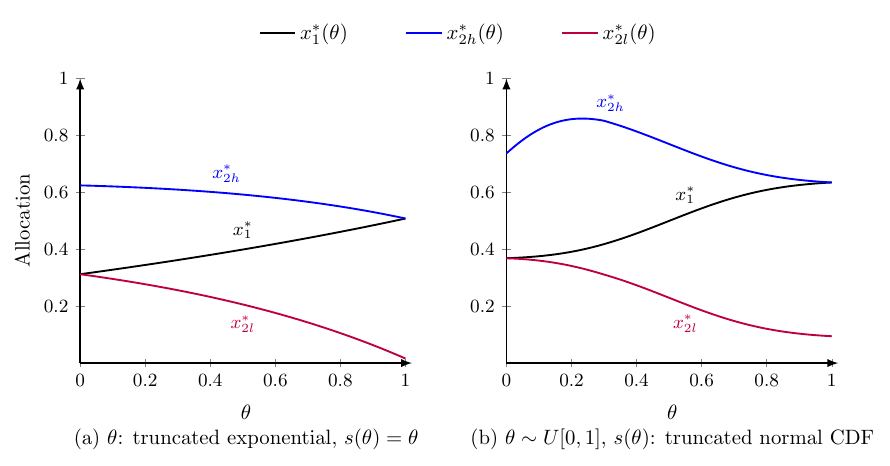}
    \caption{Optimal mechanism for two examples.}
  \label{fig:two-examples}
\end{figure}

Figure~\ref{fig:two-examples} shows the optimal mechanism for two examples.\footnote{Outside the Beta family, the optimal mechanism generally does not admit a closed-form representation. For these two examples, the allocation rules are obtained by solving the differential equation from the Euler--Lagrange condition numerically.} The two examples display different monotonicity patterns for $x_{2h}^*$. 
In Panel~(a), the type distribution is the exponential distribution with rate $\lambda=1.5$, truncated to $[0,1]$, i.e.,\ $f(\theta)=\lambda e^{-\lambda\theta}/(1-e^{-\lambda})$ for $\theta\in[0,1]$, and the signal function is $s(\theta)=\theta$. In Panel~(b), the type distribution is uniform on $[0,1]$, and the signal function is S-shaped, given by the CDF of a normal distribution $N(\tfrac12,\tfrac1{16})$ truncated to $[0,1]$. 
Both examples exhibit the general properties established above: $x_1^*$ is increasing, $x_{2l}^*$ is decreasing, and the sandwich property holds. The behavior of $x_{2h}^*$, however, differs across the two examples. In panel~(a), $x_{2h}^*$ is decreasing throughout, whereas in panel~(b), it is nonmonotone, first increasing and then decreasing. In both cases, $x_{2h}^*$ is decreasing near the highest type, consistent with our previous discussion.

Note that the equality $x_1^*(\bar{\theta})=x_{2h}^*(\bar{\theta})$ in these two examples is coincidental: it arises because the highest type receives an $H$ signal with probability one. In general, these two allocations need not coincide. By contrast, the equality $x_1^*(\underline{\theta})=x_{2l}^*(\underline{\theta})$ is a general property of the optimal mechanism. Moreover, if the lowest type has a positive probability of receiving an $H$ signal, i.e., if $s(\underline{\theta})>0$, then
$
x_1^*(\underline{\theta})
=
x_{2l}^*(\underline{\theta})
=
x_{2h}^*(\underline{\theta}).
$
The following proposition summarizes these endpoint properties.

\begin{proposition} \label{prop:boundaryconditions}
The following boundary conditions hold in the optimal mechanism:
\begin{enumerate}
    \item If $s(\underline{\theta})=0$, then $x^*_1(\underline{\theta}) = x_{2l}^*(\underline{\theta})$.
    \item If $s(\underline{\theta})>0$, then $x^*_1(\underline{\theta}) = x_{2l}^*(\underline{\theta}) = x_{2h}^*(\underline{\theta})$.
\end{enumerate}
\end{proposition}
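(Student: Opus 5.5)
The plan is to combine the balancing condition from the proof of Theorem~\ref{thm:frontloading-information} with a local perturbation at the bottom of the type space. The balancing condition $x_1^*=m^*=x_{2l}^*+s\Delta^*$ gives, for every type,
\[
x_1^*(\theta)-x_{2l}^*(\theta)=s(\theta)\Delta^*(\theta),\qquad x_{2h}^*(\theta)-x_1^*(\theta)=\bigl(1-s(\theta)\bigr)\Delta^*(\theta).
\]

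\textbf{Case 1: $s(\underline\theta)=0$.} Evaluating the first identity at $\underline\theta$ gives $x_1^*(\underline\theta)=x_{2l}^*(\underline\theta)$ at once. No further argument is needed here.

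\textbf{Case 2: $s(\underline\theta)>0$.} The two identities show it suffices to prove $\Delta^*(\underline\theta)=0$. By Proposition~\ref{prop:nonnegative-spread}, $\Delta^*(\underline\theta)\ge 0$. Suppose instead that $\Delta^*(\underline\theta)=\delta>0$. I will work in the screening-type formulation: let $U^*$ be the convex function of $y$ with $U^{*\prime}(y)=\Delta^*(\psi(y))\ge\delta$ for all $y$, and minimize the reduced objective $\mathcal J$ in \eqref{eq:reduced-objective-intuition}. For small $\varepsilon>0$, define
\[
\widehat U(y)=\max\{U^*(y),\,U^*(\underline y)+\varepsilon\}.
\]
This function has the following properties:
\begin{enumerate}[(i)]
\item It is a maximum of two convex functions, so it is convex, and incentive compatibility is preserved.
\item It weakly exceeds $U^*$, so individual rationality is preserved.
\item It differs from $U^*$ only on an interval $[\underline y,y_\varepsilon]$ of length at most $\varepsilon/\delta$, and there $0\le \widehat U-U^*\le\varepsilon$.
\end{enumerate}

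The next step is to compare the two terms of $\mathcal J$ on this interval.
\begin{itemize}
\item \emph{First term.} Since $U^*$ and $\psi$ are bounded, $(\widehat U-2\psi)^2-(U^*-2\psi)^2$ is $O(\varepsilon)$ pointwise. Over an interval of length $O(\varepsilon)$ this increases the cost by at most $O(\varepsilon^2)$.
\item \emph{Second term.} On $[\underline y,y_\varepsilon)$ we have $\widehat U'=0$, whereas $(U^{*\prime})^2\ge\delta^2$. This saves at least $\int_{\underline y}^{y_\varepsilon} y(1-y)\delta^2 g(y)\,dy$. Because $0<\underline y<\bar y\le 1$ and $g>0$ near $\underline y$ (from $f>0$ and $s'>0$), the weight $y(1-y)g(y)$ is bounded below by a positive constant near $\underline y$.
\item \emph{Length of the interval.} By continuity of $U^{*\prime}$ near $\underline y$, $y_\varepsilon-\underline y$ is of order $\varepsilon/U^{*\prime}(\underline y)$. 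So the saving is of order $\varepsilon$.
\end{itemize}
For $\varepsilon$ small enough, the order-$\varepsilon$ saving dominates the $O(\varepsilon^2)$ cost, so $\mathcal J(\widehat U)<\mathcal J(U^*)$. This contradicts optimality, hence $\Delta^*(\underline\theta)=0$ and therefore $x_1^*(\underline\theta)=x_{2l}^*(\underline\theta)=x_{2h}^*(\underline\theta)$.

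\textbf{Main obstacle.} The delicate step is making the perturbation admissible in $C(\Theta)$. The spread implied by $\widehat U$ jumps from $0$ to $U^{*\prime}(y_\varepsilon)$ at the kink. I would handle this as in the footnote to the nonnegative-spread argument: approximate $\widehat U$ by $C^1$ convex functions that lie weakly above $U^*$ and agree with it away from the bottom interval. The strict improvement in $\mathcal J$ survives a sufficiently fine approximation. I would then recover the mechanism via $x_1=\tfrac12 U$, $\Delta=U'\circ s$, $x_{2l}=x_1-s\Delta$ and $x_{2h}=x_{2l}+\Delta$, all of which are continuous.

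A second, minor point needs checking: the reduced objective $\mathcal J$ already imposes $x_1=\tfrac12 U$, and the pointwise optimal $x_1$ given $U$ is exactly $\tfrac12 U$. So comparing $\mathcal J$ values is legitimate for comparing the principal's payoffs.
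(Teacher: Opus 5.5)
Your proof is correct. Case 1 is identical to the paper's, and Case 2 uses the same overall strategy: you contradict $\Delta^*(\underline\theta)>0$ by lowering the spread near the bottom and raising interim payoffs there, which keeps IC and IR intact. The perturbation itself is different.

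\textbf{The paper's perturbation.} The paper works in $\theta$-space with a continuous, fixed-support perturbation
\[
\Delta_\varepsilon=\Delta^*-\varepsilon\phi,\qquad u_\varepsilon=u^*+\varepsilon v,\qquad v(\theta)=\int_\theta^{\bar\theta}s'\phi,
\]
where $\phi$ is a decreasing ramp on $[\underline\theta,c]$. It computes the one-sided directional derivative of the loss and rewrites it by Fubini as $\int_{\underline\theta}^c\phi(t)[\cdots]\,dt$. The bracket equals $-2s(1-s)\Delta^*f<0$ at $t=\underline\theta$, so choosing $c$ close to $\underline\theta$ makes the derivative negative. Because $\Delta_\varepsilon$ stays continuous and increasing, the perturbed mechanism is directly admissible in $C(\Theta)$, and no smoothing is needed.

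\textbf{Your perturbation.} Your truncation $\max\{U^*,U^*(\underline y)+\varepsilon\}$ reuses the flattening device from the proof of Proposition~\ref{prop:nonnegative-spread}, with support shrinking with $\varepsilon$. The level cost is then automatically $O(\varepsilon^2)$, since both the height and the width of the modification are $O(\varepsilon)$. The variance saving is of order $\varepsilon$, for two reasons:
\begin{enumerate}
\item the weight $y(1-y)g(y)$ is bounded away from zero at $\underline y>0$;
\item the width is at least $\varepsilon/\max U^{*\prime}$, which is cleaner than invoking continuity of $U^{*\prime}$ at $\underline y$.
\end{enumerate}

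\textbf{Trade-off.} Your route gives a finite rather than merely first-order improvement, and the $\varepsilon$ versus $\varepsilon^2$ scaling makes transparent why $s(\underline\theta)>0$ is needed. The price is the kink at $y_\varepsilon$ and hence the extra $C^1$ approximation step. That step goes through, for example by mollifying the slopes as in the proof of Proposition~\ref{prop:nonnegative-spread}. You do not need the approximants to dominate $U^*$: individual rationality only requires $U_j\ge U^*(\underline y)+\varepsilon\ge 0$, which holds automatically for increasing approximants anchored at $\widehat U(\underline y)$.
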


Since $x_1^*(\theta)-x_{2l}^*(\theta) = s(\theta)\Delta^*(\theta),$ the first condition follows immediately.
Now suppose that $s(\underline{\theta})>0$. We show that $\Delta^*(\underline{\theta})=0.$
If instead $\Delta^*(\underline{\theta})>0$, then the positive spread at the bottom generates costly variation across the two review outcomes. In the principal's expected loss in \eqref{eq:reduced-objective-intuition}, this cost is captured by the second term inside the brackets, since $U'=\Delta$. We then construct a local modification that reduces the spread over a small interval near the bottom while preserving incentive compatibility through a small upward adjustment in the expected payoffs of those types. This modification strictly lowers the principal's expected loss, contradicting the optimality of $\Delta^*(\underline{\theta})>0$.
Hence, $\Delta^*(\underline{\theta})=0,$ which implies $x_1^*(\underline{\theta}) = x_{2l}^*(\underline{\theta}) = x_{2h}^*(\underline{\theta}).$

Figure~\ref{fig:one-example} provides an example with $s(\underline{\theta})>0$, in which all three allocation rules coincide at the bottom. In this example, the type distribution is uniform on $[0,1]$, and the signal function is $s(\theta)=0.15+0.75\theta$.\footnote{The kinks in $x_{2h}^*$ and $x_{2l}^*$ around $\theta\approx 0.3$ arise from ironing, where the monotonicity constraint begins to bind.} At the bottom, the optimal mechanism offers a constant-funding option: an agent who reports the lowest type receives the same allocation regardless of the review outcome. Higher reports can secure more funding initially and following a good review, but a bad review can leave them with less funding than this baseline.

 \begin{figure}[htbp]
  \centering
  \includegraphics[width=0.6\textwidth]{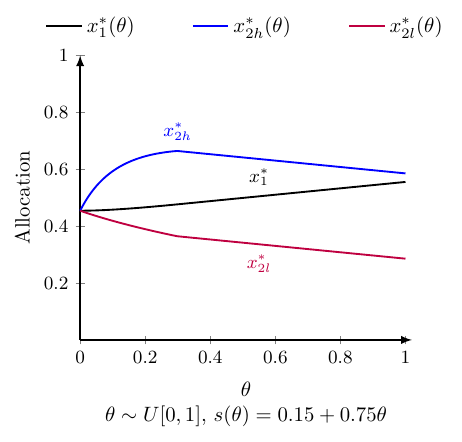}
    \caption{An example with $x_1^*(\underline{\theta}) = x_{2l}^*(\underline{\theta}) = x_{2h}^*(\underline{\theta})$.}
  \label{fig:one-example}
\end{figure}

Combining this boundary condition with the monotonicity results in Theorem~\ref{thm:general-monotonicity} and the sandwich property in Theorem~\ref{thm:sandwich} yields the following corollary.

 \begin{corollary}\label{cor:nonmonotone}
    When $s(\underline{\theta})>0$, $x^*_{2h}$ is nonmonotone.
 \end{corollary}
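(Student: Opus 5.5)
We argue by contradiction. Assume $s(\underline{\theta})>0$ and that $x_{2h}^*$ is monotone on $\Theta$. Theorem~\ref{thm:general-monotonicity}(iii) states that $x_{2h}^*$ cannot be increasing on all of $\Theta$, so the only remaining possibility is that $x_{2h}^*$ is decreasing on $\Theta$. The goal is to show this is also impossible, by chaining the boundary condition, the sandwich property, and the monotonicity of $x_1^*$.

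First, Proposition~\ref{prop:boundaryconditions}(2) applies because $s(\underline{\theta})>0$, and gives $x_1^*(\underline{\theta})=x_{2h}^*(\underline{\theta})=x_{2l}^*(\underline{\theta})$. Next, combine Theorem~\ref{thm:sandwich} with Theorem~\ref{thm:general-monotonicity}(i) and the hypothesis that $x_{2h}^*$ is decreasing. For every $\theta$,
\[
x_1^*(\underline{\theta})\le x_1^*(\theta)\le x_{2h}^*(\theta)\le x_{2h}^*(\underline{\theta})=x_1^*(\underline{\theta}).
\]
Hence $x_1^*$ and $x_{2h}^*$ are both constant and equal on $\Theta$. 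From \eqref{eq:x1}, ${x_1^*}'=\tfrac12 s'\Delta^*$ on the interior, with $s'>0$, so $\Delta^*\equiv 0$. Equivalently, $u^*=2x_1^*$ is constant, and by Lemma~\ref{lem:IC}(ii) together with continuity this forces $\Delta^*\equiv0$. It then follows that $x_{2l}^*=x_{2h}^*$ as well, so the optimal mechanism is a constant allocation.

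This contradicts the fact, recorded after Theorem~\ref{thm:frontloading-information} and established in the Appendix, that $\Delta^*\not\equiv 0$. If one prefers not to rely on that fact, a direct contradiction is also available: with $\Delta^*\equiv0$ we have $L_2^*=L_1^*$ by \eqref{eq:losscomparison}, and the chain $2L_1^*\le R_1+R_2<2R_1$ then shows $x_1^*$ cannot be constant. The reason is that a constant $x_1^*$ is optimal only at the prior mean, in which case $L_1^*=R_1$.

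The main obstacle is making the step from ``$x_1^*$ constant'' to ``$\Delta^*\equiv0$'' rigorous when $\Delta^*$ is only known to be monotone and continuous, not differentiable. Using the envelope integral $u^*(\theta)-u^*(\underline{\theta})=\int s'\Delta^*$, together with $\Delta^*\ge0$ from Proposition~\ref{prop:nonnegative-spread} and $s'>0$, handles this cleanly: a zero integral of a nonnegative continuous integrand forces the integrand to vanish. Either way we reach a contradiction, so $x_{2h}^*$ is nonmonotone.
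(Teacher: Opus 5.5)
Your proposal is correct and follows the paper's proof essentially step for step. You rule out the increasing case via Theorem~\ref{thm:general-monotonicity}(iii), then chain Proposition~\ref{prop:boundaryconditions}, the sandwich property, and the monotonicity of $x_1^*$ to force $x_1^*$ constant, and finally use the envelope integral with $\Delta^*\ge 0$ to obtain $\Delta^*\equiv 0$, which contradicts $\Delta^*\not\equiv 0$. Your backup argument is also valid, since any constant first-period rule has $L_1^*\ge R_1$, contradicting $L_1^*<R_1$ from Theorem~\ref{thm:frontloading-information}; it even makes the detour through $\Delta^*\equiv 0$ unnecessary.
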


Read as a funding policy, the optimal mechanism has a simple description. At the outset, the agent reports his project's quality, and the principal allocates more resources to projects that report higher quality. After a bad review, every project is cut, and the cut is deeper for projects that reported higher quality---so much so that the ranking reverses, with higher-quality projects ending up with less funding. After a good review, every project's allocation rises, but the resulting allocation need not preserve the initial ranking. What holds generally is that the allocation after a good review cannot be increasing in quality throughout the entire support and, for Lipschitz continuous allocation rules, it is strictly decreasing among the best projects.

The principal thus holds each project to a standard set by its own report: a higher report earns more resources up front but also exposes the agent to a larger penalty following a disappointing review. This exposure, together with the possibility that higher-quality projects receive lower allocations even after a good review, makes truthful reporting credible and allows the principal to frontload the midterm review's information value.

\section{Conclusion}
\label{sec:conclusion}

In this paper, we study a dynamic delegation problem with a fully biased agent and interim public information, motivated by a resource-allocation setting. We show that a midterm review has value beyond the information it provides for subsequent allocation decisions: its anticipated arrival can also be used to screen the agent and elicit truthful information at the outset. In the optimal mechanism, projects that report higher quality receive more resources initially but may face less favorable review-contingent continuation terms, which disciplines exaggeration. By exploiting this incentive value of the review, the principal can also improve the initial allocation.

In some applications, the initial allocation may affect the signal probability. For instance, a better-funded project may produce more evidence and therefore be more likely to pass the midterm review. This creates an additional incentive for the agent to exaggerate project quality, since a higher report can now improve both the initial allocation and the probability of receiving a favorable signal. As a result, making the initial allocation more responsive to reported quality requires a larger offset in the probability-weighted continuation slopes. This new force may give the principal a reason to preserve the review's screening power by postponing some resources to the second period.

It is also possible that the principal and agent are informed differently than we have assumed. For instance, the principal may be able to conduct an initial assessment, or the project's type may involve uncertainty so that even the agent cannot fully anticipate its quality at the outset. In these settings, the agent may still possess more information than the principal, and the design of the initial allocation together with review-contingent continuation terms should still help elicit the private information the agent possesses. We conjecture that the forces we have identified remain relevant, though the exact form of the allocation rules may depend on the parties' information structure. We leave the analysis of these extensions to future research.

\begin{appendices}

    \section{Proofs of Results}

    \subsection{Proof of Lemma \ref{lem:IC}}

    \begin{proof}
An agent with true type $\theta$ who reports $\hat{\theta}$ obtains
\begin{align}
    &x_1(\hat{\theta})
    +s(\theta)x_{2h}(\hat{\theta})
    +\bigl(1-s(\theta)\bigr)x_{2l}(\hat{\theta})
    =
    u(\hat{\theta})
    +
    \bigl[
        s(\theta)-s(\hat{\theta})
    \bigr]\Delta(\hat{\theta}).
    \label{eq:lemma1-general-report-payoff}
\end{align}

We first prove necessity. Suppose that the mechanism is incentive
compatible. Fix two types $\theta_2>\theta_1$, and write
$
    s_i\equiv s(\theta_i),
    \Delta_i\equiv\Delta(\theta_i),
    u_i\equiv u(\theta_i),
    i\in\{1,2\}.
$
The incentive constraint for type $\theta_2$ against reporting
$\theta_1$ gives
$
    u_2
    \geq
    u_1+(s_2-s_1)\Delta_1.
$
Similarly, the incentive constraint for type $\theta_1$ against
reporting $\theta_2$ gives
$
    u_1
    \geq
    u_2+(s_1-s_2)\Delta_2.
$
Thus,
\begin{equation}
\label{eq:lemma1-general-IC-bounds}
    (s_2-s_1)\Delta_1
    \leq
    u_2-u_1
    \leq
    (s_2-s_1)\Delta_2.
\end{equation}
Because $s$ is strictly increasing, $s_2-s_1>0$. It follows from
\eqref{eq:lemma1-general-IC-bounds} that
$
    \Delta_1\leq\Delta_2.
$
Hence $\Delta$ is increasing, proving condition {\rm (i)}.

We next derive the Envelope condition. Dividing
\eqref{eq:lemma1-general-IC-bounds} by
$\theta_2-\theta_1>0$ gives
$
    \frac{s_2-s_1}{\theta_2-\theta_1}\Delta_1
    \leq
    \frac{u_2-u_1}{\theta_2-\theta_1} \leq
    \frac{s_2-s_1}{\theta_2-\theta_1}\Delta_2.
$
Fix an interior type $\theta$. Setting $\theta_1=\theta$ and letting
$\theta_2$ converge to $\theta$ from above, continuity of $s'$ and
$\Delta$ implies
$
    u_+'(\theta)
    =
    s'(\theta)\Delta(\theta).
$
Similarly, setting $\theta_2=\theta$ and letting $\theta_1$ converge
to $\theta$ from below gives
$
    u_-'(\theta)
    =
    s'(\theta)\Delta(\theta).
$
Therefore, $u$ is differentiable at every interior type and
$
    u'(\theta)
    =
    s'(\theta)\Delta(\theta) \;
    \text{for every }
    \theta\in\operatorname{int}(\Theta).
$
The corresponding identities hold at
$\underline{\theta}$ and $\bar{\theta}$ using the right and left
derivatives, respectively.

Because $s'\Delta$ is continuous, integrating
$u'(\theta)$ gives
$
    u(\theta)
    =
    u(\underline{\theta})
    +
    \int_{\underline{\theta}}^\theta
    s'(t)\Delta(t)\,dt
$
for every $\theta\in\Theta$. This proves condition {\rm (ii)}.

We next prove sufficiency. Suppose that conditions {\rm (i)} and
{\rm (ii)} hold. Fix an arbitrary true type $\theta$ and an arbitrary
report $\hat{\theta}$. By condition {\rm (ii)},
\begin{align*}
    u(\theta)
    -
    \left\{
        u(\hat{\theta})
        +
        \bigl[
            s(\theta)-s(\hat{\theta})
        \bigr]\Delta(\hat{\theta})
    \right\} &=
    \int_{\hat{\theta}}^\theta
    s'(t)\Delta(t)\,dt
    -
    \Delta(\hat{\theta})
    \int_{\hat{\theta}}^\theta
    s'(t)\,dt
    \\
    &=
    \int_{\hat{\theta}}^\theta
    s'(t)
    \bigl[
        \Delta(t)-\Delta(\hat{\theta})
    \bigr]dt \geq0.
\end{align*}
To see the last inequality, if $\theta>\hat{\theta}$, then
$\Delta(t)\geq\Delta(\hat{\theta})$ over the interval of integration.
If $\theta<\hat{\theta}$, then
$\Delta(t)\leq\Delta(\hat{\theta})$, but the limits of integration are
reversed. Thus, the inequality holds in either case.

Using \eqref{eq:lemma1-general-report-payoff}, the preceding inequality
is equivalent to
$
    u(\theta)
    \geq
    x_1(\hat{\theta})
    +s(\theta)x_{2h}(\hat{\theta})
    +\bigl(1-s(\theta)\bigr)x_{2l}(\hat{\theta}).
$
Hence truthful reporting is optimal for every type, and the mechanism
is incentive compatible.
\end{proof}

\subsection{Proof of Proposition \ref{prop:beta-prior}}

\begin{proof}

We first verify incentive compatibility. Let $n \equiv \alpha+\beta$. Consider an agent with true
type $\theta$ who reports $\hat{\theta}$.  His expected payoff is
\begin{align*}
    x_1^*(\hat{\theta})
    +\theta x_{2h}^*(\hat{\theta})
    +(1-\theta)x_{2l}^*(\hat{\theta})
  &=
    \frac{
        2\alpha+\hat{\theta}
        +\theta(2\alpha+2-\hat{\theta})
        +(1-\theta)(2\alpha-\hat{\theta})
    }{2n+1}\\
&=
    \frac{4\alpha+2\theta}{2n+1}.
\end{align*}
The last expression does not depend on the report $\hat{\theta}$.
Thus, every report is optimal for every type; in particular, truthful
reporting is optimal. Therefore, $x^*$ is incentive compatible.

Under truthful reporting, the agent's interim expected payoff is
$
    u^*(\theta)
    =
    \frac{4\alpha+2\theta}{2n+1}.
$
Since $\alpha>0$ and $\theta\in[0,1]$, we have
$u^*(\theta)>0$ for every type. Hence the proposed mechanism
is individually rational and therefore implementable.
Moreover,
$
    u^{*\prime}(\theta)
    =
    \frac{2}{2n+1}
    =
    x_{2h}^*(\theta)-x_{2l}^*(\theta)
$
for every $\theta\in[0,1]$.

We next establish global optimality, even when the participation
constraint is omitted. Let $x=\{x_1,x_{2h},x_{2l}\}$ be any
continuous incentive-compatible mechanism. Define its
interim expected payoff by
$
    u_x(\theta).
$
Because the action rules are continuous,
$x_{2h}-x_{2l}$ is continuous. The Envelope condition therefore
implies that $u_x\in C^1[0,1]$ and
$
    u_x'(\theta)
    =
    x_{2h}(\theta)-x_{2l}(\theta)$ for every $\theta\in[0,1]$.

Define the deviations from $x^*$ by
$
    d_1(\theta)
    \equiv
    x_1(\theta)-x_1^*(\theta),   d_h(\theta)
    \equiv
    x_{2h}(\theta)-x_{2h}^*(\theta),
    d_l(\theta)
    \equiv
    x_{2l}(\theta)-x_{2l}^*(\theta),
$
and define
$
    z(\theta)
    \equiv
    u_x(\theta)-u^*(\theta).
$
By the definitions of $u_x$ and $u^*$,
\begin{equation}
\label{eq:prop-beta-utility-deviation}
      z(\theta)
    =
    d_1(\theta)
    +\theta d_h(\theta)
    +(1-\theta)d_l(\theta).
\end{equation}
Because $u_x,u^*\in C^1[0,1]$, we have $z\in C^1[0,1]$.
Moreover, for every $\theta\in[0,1]$,
\begin{align}
   z'(\theta)
    =
    u_x'(\theta)-u^{*\prime}(\theta)
    &=
    \left[
        x_{2h}(\theta)-x_{2l}(\theta)
    \right]
    -
    \left[
        x_{2h}^*(\theta)-x_{2l}^*(\theta)
    \right]
    \notag\\
    &=
    d_h(\theta)-d_l(\theta).
    \label{eq:prop-beta-utility-derivative}
\end{align}

Now define the auxiliary function
$
    \lambda^*(\theta)
    \equiv
    -\frac{4}{2n+1}\theta(1-\theta)f(\theta)
    =
    -\frac{4}{(2n+1) B(\alpha,\beta)}
    \theta^\alpha(1-\theta)^\beta.
$
Because $\alpha,\beta>0$, the function $\lambda^*$ is absolutely
continuous on $[0,1]$ and satisfies
$
    \lambda^*(0)=\lambda^*(1)=0.
$
For every
$\theta\in(0,1)$, we have 
$
    \frac{d}{d\theta}
    \left[
        \theta(1-\theta)f(\theta)
    \right]
    =
    f(\theta)\bigl(\alpha-n\theta\bigr).
$
Consequently,
$
    \lambda^{*\prime}(\theta)
    =
    -\frac{4}{2n+1}
    f(\theta)\bigl(\alpha-n\theta\bigr).
$
A direct calculation gives, for every $\theta\in(0,1)$,
\begin{align}
    -2f(\theta)
    \bigl(x_1^*(\theta)-\theta\bigr)
    &=
    \lambda^{*\prime}(\theta),
    \label{eq:prop-beta-identity-x1}
    \\
    -2\theta f(\theta)
    \bigl(x_{2h}^*(\theta)-\theta\bigr)
    &=
    \theta\lambda^{*\prime}(\theta)
    +\lambda^*(\theta),
    \label{eq:prop-beta-identity-x2h}
    \\
    -2(1-\theta)f(\theta)
    \bigl(x_{2l}^*(\theta)-\theta\bigr)
    &=
    (1-\theta)\lambda^{*\prime}(\theta)
    -\lambda^*(\theta).
    \label{eq:prop-beta-identity-x2l}
\end{align}

The principal's expected payoff under a mechanism $x$ is
\[
    U_p(x)
    =
    -\int_0^1
    f(\theta)
    \left[
        (x_1(\theta)-\theta)^2
        +\theta(x_{2h}(\theta)-\theta)^2
        +(1-\theta)(x_{2l}(\theta)-\theta)^2
    \right]d\theta.
\]
All the terms below are integrable. Indeed, the action rules are
continuous and therefore bounded on $[0,1]$, the density $f$
integrates to one, and $\lambda^*$ is absolutely continuous.

Expanding the principal's quadratic payoff around $x^*$ and using
\eqref{eq:prop-beta-identity-x1}--%
\eqref{eq:prop-beta-identity-x2l}, we obtain
\begin{equation*}
\begin{aligned}
    &U_p(x)-U_p(x^*)
    ={}
    \int_0^1
    \Big[
        \lambda^{*\prime}(\theta)d_1(\theta)
        +
        \bigl(
            \theta\lambda^{*\prime}(\theta)
            +\lambda^*(\theta)
        \bigr)d_h(\theta)
    \\
    &
        +
        \bigl(
            (1-\theta)\lambda^{*\prime}(\theta)
            -\lambda^*(\theta)
        \bigr)d_l(\theta)
    \Big]d\theta
    -
    \int_0^1
    f(\theta)
    \left[
        d_1(\theta)^2
        +\theta d_h(\theta)^2
        +(1-\theta)d_l(\theta)^2
    \right]d\theta.
\end{aligned}
\end{equation*}

Using \eqref{eq:prop-beta-utility-deviation} and
\eqref{eq:prop-beta-utility-derivative}, the first integral in
$U_p(x)-U_p(x^*)$ can be rewritten as
\begin{align*}
    &\int_0^1
    \Big[
        \lambda^{*\prime}(\theta)
        \bigl(
            d_1(\theta)
            +\theta d_h(\theta)
            +(1-\theta)d_l(\theta)
        \bigr)
        +
        \lambda^*(\theta)
        \bigl(
            d_h(\theta)-d_l(\theta)
        \bigr)
    \Big]d\theta
    \\
    &\qquad=
    \int_0^1
    \left[
        \lambda^{*\prime}(\theta)z(\theta)
        +\lambda^*(\theta)z'(\theta)
    \right]d\theta =
    \left[
        \lambda^*(\theta)z(\theta)
    \right]_{\theta=0}^{\theta=1}
    =
    0.
\end{align*}
The second equality follows from integration by parts, which is valid
because $\lambda^*$ is absolutely continuous and $z\in C^1[0,1]$.
The final equality follows from
$
    \lambda^*(0)=\lambda^*(1)=0.
$

Therefore, we have
$
    U_p(x)-U_p(x^*)
    =
    -
    \int_0^1
    f(\theta)
    \left[
        d_1(\theta)^2
        +\theta d_h(\theta)^2
        +(1-\theta)d_l(\theta)^2
    \right]d\theta
    \leq0.
$
Thus, $x^*$ gives the principal at least as high an expected payoff as
any admissible incentive-compatible mechanism and is therefore
optimal. Its uniqueness comes from the fact that $f(\theta) >0$ on $(0,1)$ and that all allocation rules are continuous. Specifically, if $U_p(x) = U_p(x^*)$, then it must imply that $ d_1(\theta) =  d_h(\theta)  =  d_l(\theta) = 0$ for every $\theta\in(0,1)$, which means that $x = x^*$ on $(0,1)$. Continuity of the deviation functions then
implies the same equalities at $\theta=0$ and $\theta=1$. Hence,
$
    x=x^*
$
on all of $[0,1]$, proving uniqueness.

\end{proof}

\subsection{Proof of Lemma \ref{lem:existence}}

\begin{proof}
    \underline{    \textbf{\emph{Existence.}}}
Let us use $s(\theta) \equiv y$ as the effective screening type. Use the notations related to $y$ defined in the proof of Proposition~\ref{prop:nonnegative-spread}.  In particular, we have $g(y) \equiv f(\psi(y))/s'(\psi(y))$ denoting the density of $y$. Both $f$ and $s'$ are continuous, it follows that $g$ is continuous and that 
$
0<\underline g
\equiv \min_{y\in[\underline{y},\bar{y}]}g(y)
\le g(y)\le
\bar g
\equiv \max_{y\in[\underline{y},\bar{y}]}g(y)
<\infty.
$
Define $w(y) \equiv y(1-y)g(y)$. It is continuous and strictly positive in the interior of $[\underline{y},\bar{y}]$.

\medskip
\noindent
\emph{Step 1: Reduction to a convex-function problem.}

According to the proof of Proposition~\ref{prop:nonnegative-spread}, the principal's expected loss equals:
\[
\mathcal J(U)
\equiv
\int_{\underline{y}}^{\bar{y}}
\left[
\frac12\bigl(U(y)-2\psi(y)\bigr)^2g(y)
+w(y)\bigl(U'(y)\bigr)^2
\right]dy,
\]
where $U(y)\equiv u(\psi(y))$ is the agent's interim expected payoff as a function of the effective screening type $y$. $U$ is nonnegative, convex and continuously differentiable, with $U'(y)=\Delta(\psi(y))$. We temporarily drop the nonnegativity constraint and minimize
$\mathcal J$ over the larger class of all convex Lipschitz
functions on $[\underline y,\bar y]$. We then show that a
minimizer in this larger class is continuously differentiable
and nonnegative.

\medskip
\noindent
\emph{Step 2: Existence in the larger class.}

Choose a minimizing sequence $(U_n)$, that is, a sequence of
convex Lipschitz functions on $[\underline{y},\bar{y}]$ whose
losses converge to the infimum:
\[
\mathcal J(U_n)\longrightarrow
\inf\left\{
\mathcal J(U):
U\text{ is convex and Lipschitz on }[\underline{y},\bar{y}]
\right\}
\qquad\text{as }n\to\infty.
\]
The infimum is finite because $\mathcal J$ is nonnegative
and finite for constant functions. Consequently, the sequence
of losses is bounded, so there exists a finite constant $C>0$
such that $\mathcal J(U_n)\le C$ for every $n$. Replacing $U_n$ by $\max\{U_n,2\underline\theta\}$ preserves
convexity and Lipschitz continuity and weakly reduces both terms
in $\mathcal J$. Hence we may assume that
$U_n\ge 2\underline\theta$.

We next show that the slopes can be bounded uniformly.
Let
$
a\equiv\frac{\underline{y}+\bar{y}}{2}\in(0,1),
\qquad
K_0\equiv
\left(
\frac{C}{
\min\left\{
\int_{\underline{y}}^{a}w(y)\,dy,\,
\int_a^{\bar{y}}w(y)\,dy
\right\}}
\right)^{1/2}.
$
Convexity and the bound on the derivative term in $\mathcal J$
imply that both one-sided slopes of $U_n$ at $a$ lie in
$[-K_0,K_0]$. Indeed, a positive slope at $a$ bounds all slopes
to its right from below, while a negative slope bounds all
slopes to its left from above.

Choose $K>K_0$ sufficiently large that
$
2K>
\frac{2(\bar\theta-\underline\theta)\bar g}
{\underline g\min\{a,1-a\}}.
$
Writing $q_n=U_n'$ almost everywhere, define the capped slopes
and the corresponding functions by
$
\widehat q_n(y)
\equiv \max\{-K,\min\{q_n(y),K\}\},
\widehat U_n(y)
\equiv U_n(a)+\int_a^y\widehat q_n(t)\,dt.
$
The functions $\widehat U_n$ are convex and $K$-Lipschitz.
Because the slopes at $a$ lie strictly between $-K$ and $K$,
the caps only replace an outer tail by an affine continuation:
the slope is $-K$ on an altered left tail and $K$ on an altered
right tail. Consequently,
$
2\underline\theta\le \widehat U_n(y)\le U_n(y)
\;\text{for every }y\in[\underline{y},\bar{y}].
$

Fubini's theorem gives
\begin{align*}
&\int_{\underline{y}}^{\bar{y}}
\bigl(U_n(y)-\widehat U_n(y)\bigr)g(y)\,dy \\
&\quad =
\int_{\underline{y}}^{a}
\bigl|q_n(t)-\widehat q_n(t)\bigr|
\left(\int_{\underline{y}}^{t}g(y)\,dy\right)dt +
\int_a^{\bar{y}}
\bigl|q_n(t)-\widehat q_n(t)\bigr|
\left(\int_t^{\bar{y}}g(y)\,dy\right)dt \\
&\quad \le
\frac{\bar g}{\underline g\min\{a,1-a\}}
\int_{\underline{y}}^{\bar{y}}
w(t)\bigl|q_n(t)-\widehat q_n(t)\bigr|\,dt.
\end{align*}
For the last inequality, use
$\underline{y}\ge0$, $\bar{y}\le1$, and the bounds on $g$. Since $2\psi(y)\le 2\bar\theta$ and $\widehat U_n(y)\ge 2\underline\theta$,
\begin{align*}
\frac12\left[
(\widehat U_n-2\psi)^2-(U_n-2\psi)^2
\right] & =
\left(2\psi-\frac{U_n+\widehat U_n}{2}\right)
\bigl(U_n-\widehat U_n\bigr) \le
2(\bar\theta-\underline\theta)
\bigl(U_n-\widehat U_n\bigr).
\end{align*}
Moreover,
$
q_n^2-\widehat {q_n}^2
\ge 2K\bigl|q_n-\widehat q_n\bigr|.
$
Combining these inequalities yields
\begin{align*}
\mathcal J(\widehat U_n)-\mathcal J(U_n)
&\le
\left(
\frac{2(\bar\theta-\underline\theta)\bar g}
{\underline g\min\{a,1-a\}}
-2K
\right)
\int_{\underline{y}}^{\bar{y}}
w(y)\bigl|q_n(y)-\widehat q_n(y)\bigr|\,dy \le 0.
\end{align*}
Thus $(\widehat U_n)$ remains a minimizing sequence and has a
common Lipschitz constant.

The loss bound also gives a uniform bound on its levels.
Indeed, since $\int_{\underline{y}}^{\bar{y}}g(y)\,dy=1$,
$
\|\widehat U_n\|_\infty
\le
\max\{|2\underline\theta|,|2\bar\theta|\}+\sqrt{2C}
+K(\bar{y}-\underline{y}).
$
By the Arzel\`a--Ascoli theorem, a subsequence converges
uniformly to a convex $K$-Lipschitz function $U^*$.
Convexity implies that
$
\widehat U_n'(y)\longrightarrow U^{*\prime}(y)
\; \text{for almost every }y\in[\underline{y},\bar{y}].
$
The uniform bounds on levels and slopes therefore allow
dominated convergence, giving
$
\mathcal J(U^*)
=
\lim_{n\to\infty}\mathcal J(\widehat U_n)
=
\inf\left\{
\mathcal J(U):
U\text{ is convex and Lipschitz on }[\underline{y},\bar{y}]
\right\}.
$

\medskip
\noindent
\emph{Step 3: Continuity of the induced allocation rules.}

Suppose that $U^*$ has a kink at an interior point $r$, so that
$
U^{*\prime}_-(r)
<
U^{*\prime}_+(r).
$
For sufficiently small $\varepsilon>0$, replace $U^*$ on
$[r-\varepsilon,r+\varepsilon]$ by the chord joining its values
at the two endpoints, leaving it unchanged elsewhere.
Denote the resulting function by $V_\varepsilon$.
This function remains convex and Lipschitz.

The modification changes the level-loss term by
$O(\varepsilon^2)$: its support has length $2\varepsilon$,
and $\|V_\varepsilon-U^*\|_\infty=O(\varepsilon)$.
The chord's slope converges to
$
\frac{U^{*\prime}_-(r)+U^{*\prime}_+(r)}{2},
$
while the slopes of $U^*$ on the two sides converge to
$U^{*\prime}_-(r)$ and $U^{*\prime}_+(r)$, respectively.
By continuity of $w$,
\[
\mathcal J(V_\varepsilon)-\mathcal J(U^*)
=
-\frac{\varepsilon}{2}w(r)
\left(
U^{*\prime}_+(r)-U^{*\prime}_-(r)
\right)^2
+o(\varepsilon)
<0
\]
for sufficiently small $\varepsilon$, contradicting optimality.
Hence $U^*$ has no interior kinks.

Its derivative is therefore continuous in the interior.
Because this derivative is increasing and bounded, it has
finite limits at both endpoints, equal to the corresponding
one-sided derivatives. Thus
$U^*\in C^1([\underline{y},\bar{y}])$.

Finally, define
$
u^*(\theta)\equiv U^*(s(\theta)),
\;
\Delta^*(\theta)\equiv U^{*\prime}(s(\theta)),
$
and set
$
x_1^*(\theta)=\frac{u^*(\theta)}{2},
\;
x_{2h}^*(\theta)
=\frac{u^*(\theta)}{2}
 +(1-s(\theta))\Delta^*(\theta),
\;
x_{2l}^*(\theta)
=\frac{u^*(\theta)}{2}
 -s(\theta)\Delta^*(\theta).
$
The bound $\widehat U_n(y)\ge2\underline\theta$ passes to the
uniform limit, giving $u^*(\theta)=U^*(s(\theta)) \ge 2\theta \ge0$ for every $\theta\in\Theta$.

The induced mechanism therefore satisfies both incentive
compatibility and individual rationality.
Its expected loss equals the minimum of the relaxed problem,
which is a lower bound on the loss of every implementable
mechanism. Hence it is optimal for the principal's problem.

\medskip
\noindent
\underline{\textbf{\emph{Uniqueness.}}}
Suppose that $x=(x_1,x_{2h},x_{2l})$ and
$\widehat x=(\widehat x_1,\widehat x_{2h},\widehat x_{2l})$
are two optimal mechanisms. Their pointwise average
$x^m=(x+\widehat x)/2$ is continuous and feasible, since
both the IC and IR
constraints are linear in the allocation rules.

By the quadratic form of the principal's payoff,
\begin{align*}
U_p(x^m)
=
\frac{U_p(x)+U_p(\widehat x)}{2}+ &
\frac14\int_\Theta
\Big[
\bigl(x_1(\theta)-\widehat x_1(\theta)\bigr)^2
+s(\theta)
\bigl(x_{2h}(\theta)-\widehat x_{2h}(\theta)\bigr)^2\\
&\qquad\qquad
+(1-s(\theta))
\bigl(x_{2l}(\theta)-\widehat x_{2l}(\theta)\bigr)^2
\Big]\,dF(\theta).
\end{align*}
Since $x$ and $\widehat x$ are optimal and $x^m$ is feasible,
the nonnegative integral must equal zero.
Because $f(\theta)>0$ and $0<s(\theta)<1$ for every
$\theta\in\operatorname{int}(\Theta)$, this implies
$x=\widehat x$ almost everywhere on $\Theta$.
Continuity then implies $x=\widehat x$ everywhere on $\Theta$.
Hence the optimal mechanism is unique.

\end{proof}

\subsection{Proof of Theorem \ref{thm:frontloading-information}}
\begin{proof}

Let $R_2$ and $L_2^*$ denote the principal's expected loss in the second period under the public-information mechanism and the optimal dynamic mechanism, respectively. Specifically, 
\[
    R_2
    \equiv
    \int_{\Theta}
    \left[
        s(\theta)\bigl(a_{2h}-\theta\bigr)^2
        +
        \bigl(1-s(\theta)\bigr)
        \bigl(a_{2l}-\theta\bigr)^2
    \right]dF(\theta).
\]
\[
    L_2^*
    \equiv
    \int_{\Theta}
    \left[
        s(\theta)
        \bigl(x_{2h}^*(\theta)-\theta\bigr)^2
        +
        \bigl(1-s(\theta)\bigr)
        \bigl(x_{2l}^*(\theta)-\theta\bigr)^2
    \right]dF(\theta).
\]

The proof of Proposition~\ref{prop:nonnegative-spread} shows that $m^*(\theta)=x_1^*(\theta)$ for every $\theta\in\Theta$. For each $\theta$, the conditional variance decomposition gives
\begin{align*}
    s(\theta)
    \bigl(x_{2h}^*(\theta)-\theta\bigr)^2
    +
    \bigl(1-s(\theta)\bigr)
    \bigl(x_{2l}^*(\theta)-\theta\bigr)^2
    &=
    \bigl(m^*(\theta)-\theta\bigr)^2
    +
    s(\theta)\bigl(1-s(\theta)\bigr)
    \bigl(\Delta^*(\theta)\bigr)^2
    \\
    &=
    \bigl(x_1^*(\theta)-\theta\bigr)^2
    +
    s(\theta)\bigl(1-s(\theta)\bigr)
    \bigl(\Delta^*(\theta)\bigr)^2.
\end{align*}
Integrating over $\theta$ yields
$
    L_2^*
    =
    L_1^*
    +
    \int_{\Theta}
    s(\theta)\bigl(1-s(\theta)\bigr)
    \bigl(\Delta^*(\theta)\bigr)^2
    \,dF(\theta)
    \geq
    L_1^*.
$
Therefore,
$
    2L_1^*
    \leq
    L_1^*+L_2^*.
$

The public-information mechanism is feasible for the dynamic problem,
since its actions do not depend on the agent's report, and its positive allocations
ensure individual rationality. Hence,
optimality of the dynamic mechanism implies
$
    L_1^*+L_2^*
    \leq
    R_1+R_2.
$

Finally, because the signal is informative, the law of total variance
implies
$
    R_1-R_2
    =
    \operatorname{Var}
    \bigl(
        \mathbb{E}[\theta\mid \sigma]
    \bigr)
    >0.$
Thus,
$R_1+R_2<2R_1.$
Combining these inequalities, we obtain
$
    2L_1^*
    \leq
    L_1^*+L_2^*
    \leq
    R_1+R_2
    <
    2R_1.
$
Therefore,
$
    L_1^*<R_1.
$
\end{proof}

\subsection{Proposition \ref{prop:generallossfunctions} and its proof}
\begin{proposition}
    \label{prop:generallossfunctions}
  The following statements hold:
  \begin{enumerate}[(i)]
      \item When the loss function $l$ is continuously differentiable and strictly convex, define
$\Gamma(z)\equiv l\bigl((l')^{-1}(z)\bigr),$
which gives the loss associated with an allocation error whose marginal loss is $z$. If $\Gamma$ is convex, then the optimal mechanism frontloads information value.
      \item If $l(e)=h(|e|)$, where $h$ is increasing and concave with $h(0)=0$, then there exists an optimal mechanism that frontloads information value.
  \end{enumerate}
\end{proposition}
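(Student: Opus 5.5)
For part (i), I would reuse the balancing argument of Theorem~\ref{thm:frontloading-information}, with the first-order condition written in terms of marginal losses instead of levels. Write $e_1(\theta)=x_1^*(\theta)-\theta$, $e_{h}(\theta)=x_{2h}^*(\theta)-\theta$, $e_{l}(\theta)=x_{2l}^*(\theta)-\theta$.

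The perturbation $\{x_1^*-\epsilon\phi,\;x_{2h}^*+\epsilon\phi,\;x_{2l}^*+\epsilon\phi\}$ leaves every report's payoff unchanged, for every continuous $\phi$. It therefore preserves (IC) and (IR). Differentiating the principal's expected loss at $\epsilon=0$ gives
\[
\int_\Theta\Bigl[l'(e_1)-s\,l'(e_h)-(1-s)\,l'(e_l)\Bigr]\phi\,dF=0 .
\]
Differentiation under the integral is justified because the rules are continuous, hence bounded, and $l'$ is continuous. Since $\phi$ is arbitrary and the bracket is continuous, the pointwise balancing condition follows:
\[
z_1(\theta)=s(\theta)z_h(\theta)+(1-s(\theta))z_l(\theta),\qquad z_k\equiv l'(e_k).
\]
Strict convexity makes $l'$ strictly increasing, so $l=\Gamma\circ l'$. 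Convexity of $\Gamma$ and Jensen's inequality then give, for each $\theta$,
\[
l(e_1)=\Gamma(z_1)\le s\,\Gamma(z_h)+(1-s)\Gamma(z_l)=s\,l(e_h)+(1-s)\,l(e_l).
\]
Integrating yields $L_1^*\le L_2^*$.

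Next I would use the feasibility of the public-information mechanism. Here $a_1$ minimizes $\mathbb E[l(a-\theta)]$ and $a_{2h},a_{2l}$ minimize the posterior expected losses. Feasibility gives $L_1^*+L_2^*\le R_1+R_2$. Using $a_1$ after both signals shows $R_2\le R_1$. Combining these, $2L_1^*\le R_1+R_2\le 2R_1$.

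The main obstacle is strictness. Under quadratic loss it came from $\operatorname{Var}(\mathbb E[\theta\mid\sigma])>0$. For general $l$ it can fail that $R_2<R_1$, because $a_1$ might happen to satisfy both posterior first-order conditions. I would handle the two cases separately.
\begin{itemize}
\item If $a_1$ is not optimal for some posterior, then strict convexity of $l$ and uniqueness of the minimizer give $R_2<R_1$, and the chain above becomes strict.
\item If $R_2=R_1$, I would show directly that the public-information mechanism is not optimal, so that $L_1^*+L_2^*<R_1+R_2$. The tool is a small IC-preserving rotation as in Section~\ref{sec:example}: $x_1=a_1+\epsilon(\theta-c)$, with $x_{2h},x_{2l}$ adjusted by a constant spread so that (IC) holds. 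Its first-order effect on the loss is $-\epsilon\,\mathbb E[l'(a_1-\theta)(\theta-c)]$. This is nonzero for a suitable $c$, because $l'$ is strictly increasing, so $\operatorname{Cov}(l'(a_1-\theta),\theta)<0$; the compensating second-period terms also need to be checked.
\end{itemize}
Existence of an optimal mechanism for general $l$ I would take from an adaptation of Lemma~\ref{lem:existence}; its compactness step only needs $l$ continuous and coercive.

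For part (ii), I would use a direct construction. Take any optimal mechanism $(x_1,x_{2h},x_{2l})$, assuming existence as above, and define
\[
x_1'(\theta)=\theta,\qquad x_{2h}'=x_{2h}+(x_1-\theta),\qquad x_{2l}'=x_{2l}+(x_1-\theta).
\]
This is the perturbation above with $\phi=x_1-\theta$ and $\epsilon=1$. Every report yields the same payoff as before, so (IC) and (IR) hold, and continuity is preserved.

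Because $h$ is increasing and concave with $h(0)=0$, it is subadditive on $[0,\infty)$. Together with the triangle inequality this gives $h(|e_1+e_k|)\le h(|e_1|)+h(|e_k|)$. Hence the new mechanism's total loss is weakly lower, and it is therefore also optimal. Its first-period loss is $0$. Meanwhile $R_1=\min_a\mathbb E[h(|a-\theta|)]>0$, since $F$ has a positive density and $h(e)>0$ for $e>0$ (nonconstant increasing concave $h$). So this optimal mechanism frontloads information value.

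The only delicate point in (ii) is existence of some optimal mechanism. If needed, I would note that the same construction reduces the problem to second-period rules alone with $x_1=\theta$, and argue existence there.
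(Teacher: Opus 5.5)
Your route is the paper's. In (i), the report-neutral shift $\{x_1^*-\epsilon\phi,\,x_{2h}^*+\epsilon\phi,\,x_{2l}^*+\epsilon\phi\}$ gives pointwise equalization of expected marginal losses. Jensen's inequality for $\Gamma$ then gives $L_1^*\le L_2^*$, and the chain from Theorem~\ref{thm:frontloading-information} finishes. In (ii), your transformation is exactly the paper's, combined with subadditivity of $h$. Your remark that $R_1>0$, because a nonzero increasing concave $h$ with $h(0)=0$ is positive on $(0,\infty)$, fills a detail the paper leaves implicit. You are also right that strictness in (i) needs $R_2<R_1$. The paper gets this only by pointing to ``the same argument'' as Theorem~\ref{thm:frontloading-information}, whose law-of-total-variance step is specific to quadratic loss.

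The problem is how you then handle strictness. The case $R_2=R_1$ that you set aside cannot occur. Since $R_2\le R_1$ signal by signal, equality would require $a_1$ to minimize both posterior expected losses. Then $\mathbb{E}[l'(a_1-\theta)]=0$ and $\mathbb{E}[s(\theta)\,l'(a_1-\theta)]=0$, i.e.\ $\operatorname{Cov}\bigl(s(\theta),l'(a_1-\theta)\bigr)=0$. But $s$ is strictly increasing, $\theta\mapsto l'(a_1-\theta)$ is strictly decreasing, and $F$ is atomless, so this covariance is strictly negative. This is the same fact you invoke with $\theta$ in place of $s(\theta)$. Hence your first bullet always applies, and $2L_1^*\le R_1+R_2<2R_1$.

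Your fallback for the second bullet would also have failed. Suppose $a_{2h}=a_{2l}=a_1$, and take the IC-preserving rotation $x_1=a_1+\epsilon(\theta-c)$, $x_{2l}=a_1-\epsilon(\theta-c')$, $x_{2h}=x_{2l}+\epsilon d$. Total loss then changes at rate $(c'-c)\,\mathbb{E}[l'(a_1-\theta)]+d\,\mathbb{E}[s(\theta)\,l'(a_1-\theta)]=0$. So the ``compensating second-period terms'' exactly cancel the first-period gain. Replace the second bullet with the covariance argument.

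On existence, the paper likewise presupposes an optimal mechanism for general $l$. Note, however, that the proof of Lemma~\ref{lem:existence} runs through the quadratic reduction $x_1=m=u/2$. Adapting it therefore takes more than continuity and coercivity of $l$.
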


\begin{proof}
    For part (i), fix a type $\theta$, and let $e_1 \equiv x_1 - \theta$, $e_{2h} \equiv x_{2h} - \theta$, and $e_{2l} \equiv x_{2l} - \theta$ denote the allocation errors in the first and second periods. Let $z_h \equiv l'(e_{2h})$ and $z_l \equiv l'(e_{2l})$ denote the corresponding second-period marginal losses. Optimality requires expected marginal loss to be equalized across the two periods, which then implies
    $
    l(e_1) = \Gamma(sz_h + (1-s)z_l).
    $
     The expected second-period loss is    
     $
    s l(e_{2h}) + (1-s) l(e_{2l})
    = s \Gamma(z_h) + (1-s) \Gamma(z_l).
    $
    Hence, 
$
    \mathbb{E}[l(e_2)|\theta] - l(e_1) = s \Gamma(z_h) + (1-s) \Gamma(z_l) - \Gamma(sz_h + (1-s)z_l) \geq 0,
$
  where the inequality follows from Jensen's inequality because $\Gamma$ is convex. Thus, for every type $\theta$, the expected second-period loss is weakly larger than the first-period loss. Integrating over types gives $L_1^* \leq L_2^*$. The optimal mechanism therefore frontloads information value by the same argument used in the proof of Theorem~\ref{thm:frontloading-information}.

    For part (ii), let $a,b>0$. Concavity of $h$ and $h(0)=0$ imply
    \[
    h(a) = h \left(\frac{a}{a+b} (a+b) + \frac{b}{a+b} 0\right) \geq \frac{a}{a+b} h(a+b).
    \]
    Similarly, $h(b) \geq \frac{b}{a+b} h(a+b)$. Adding the two inequalities yields $h(a) + h(b) \geq h(a+b)$. Thus, for arbitrary signed errors $e_1$ and $e_2$,
    \[
    l(e_1+e_2) = h(|e_1+e_2|) \leq h(|e_1| + |e_2|) \leq h(|e_1|) + h(|e_2|) = l(e_1) + l(e_2),
    \]
   where the first inequality follows from the triangle inequality and the monotonicity of $h$. Hence, concentrating two allocation errors in a single period is less costly than incurring them separately across the two periods. Now consider any implementable mechanism $(x_1,x_{2h},x_{2l})$. Construct a new mechanism $(\tilde x_1,\tilde x_{2h},\tilde x_{2l})$ by setting, for every report $\hat\theta$,
$
\tilde x_1(\hat\theta)=\hat\theta,
\;
\tilde x_{2h}(\hat\theta)
=
x_1(\hat\theta)+x_{2h}(\hat\theta)-\hat\theta,
\;
\tilde x_{2l}(\hat\theta)
=
x_1(\hat\theta)+x_{2l}(\hat\theta)-\hat\theta.
$
For every report and every signal realization, the agent's total allocation is unchanged; the transformation therefore preserves both incentive compatibility and individual rationality.
For a truthful type $\theta$, the first-period error under the new mechanism is zero, while the second-period error following signal $\sigma$ equals the sum of the two original errors. By the inequality above, the transformation reduces total loss for every type and every signal realization, and hence weakly increases the principal's expected payoff. Therefore, we can apply this transformation to any optimal mechanism to obtain another optimal mechanism with zero first-period error, which frontloads information value.
\end{proof}

\subsection{Proof of Remark \ref{rmk:generallossfunctions}}

\begin{proof}
    This result is a direct consequence of Proposition~\ref{prop:generallossfunctions}. When $p>1$, this class falls under case (i) of Proposition~\ref{prop:generallossfunctions}, whereas for $0<p\leq 1$, it falls under case (ii).
\end{proof}

\subsection{Proof of Proposition \ref{prop:nonnegative-spread}}

\begin{proof}

For any mechanism, define its expected second-period action by
$
    m(\theta)
    \equiv
    s(\theta)x_{2h}(\theta)
    +
    \bigl(1-s(\theta)\bigr)x_{2l}(\theta),
$
and the agent's truthful interim payoff by $ u(\theta)
    \equiv
    x_1(\theta)+m(\theta).$
The two second-period actions can be recovered from $m$ and $\Delta$
as
$
    x_{2h}(\theta)
    =
    m(\theta)
    +
    \bigl(1-s(\theta)\bigr)\Delta(\theta), \;    x_{2l}(\theta)
    =
    m(\theta)
    -
    s(\theta)\Delta(\theta).
$

An agent with true type $\theta$ who reports $\hat{\theta}$ obtains
expected payoff
\begin{align}
    &x_1(\hat{\theta})
    +s(\theta)x_{2h}(\hat{\theta})
    +\bigl(1-s(\theta)\bigr)x_{2l}(\hat{\theta})
    \notag =
    u(\hat{\theta})
    +
    \bigl(
        s(\theta)-s(\hat{\theta})
    \bigr)\Delta(\hat{\theta}).
    \label{eq:lemma2-report-payoff}
\end{align}
Thus, all reporting incentives depend on the mechanism only through
the pair $(u,\Delta)$.

We first derive a balancing property of the optimal mechanism.
Conditional on type $\theta$, the principal's quadratic loss is
\begin{align*}
    &\bigl(x_1(\theta)-\theta\bigr)^2
    +s(\theta)
        \bigl(x_{2h}(\theta)-\theta\bigr)^2
    +\bigl(1-s(\theta)\bigr)
        \bigl(x_{2l}(\theta)-\theta\bigr)^2
    \\
    &\qquad=
    \bigl(x_1(\theta)-\theta\bigr)^2
    +\bigl(m(\theta)-\theta\bigr)^2
    +s(\theta)\bigl(1-s(\theta)\bigr)
        \Delta(\theta)^2
    \\
    &\qquad=
    \frac{1}{2}
    \bigl(u(\theta)-2\theta\bigr)^2
    +
    \frac{1}{2}
    \bigl(x_1(\theta)-m(\theta)\bigr)^2
    +
    s(\theta)\bigl(1-s(\theta)\bigr)
        \Delta(\theta)^2.
\end{align*}
For fixed $u(\theta)$ and $\Delta(\theta)$, this expression is
uniquely minimized by
$
    x_1(\theta)=m(\theta)=\frac{u(\theta)}{2}.
$

Replacing $x_1$ and $m$ by $u/2$, while holding $u$ and $\Delta$
fixed, preserves all reporting incentives. Moreover, because $u$, $\Delta$, and
$s$ are continuous, the resulting action rules are continuous.
Therefore, optimality requires 
$
    x_1^*(\theta)
    =
    m^*(\theta)
    =
    \frac{u^*(\theta)}{2}
    \;
    \text{for every }\theta\in\Theta.
$
Indeed, if $x_1^*(\theta_0)\neq m^*(\theta_0)$ at some type
$\theta_0$, continuity would imply that
$x_1^*-m^*$ is bounded away from zero on a neighborhood of
$\theta_0$. Since $f$ is strictly positive, the
balanced mechanism would then strictly lower the principal's
expected loss, contradicting optimality.

We next use the signal probability as the effective screening type.
Let $    \underline{y} \equiv s(\underline{\theta}),
    \bar{y} \equiv s(\bar{\theta}),$
and define
$
    \psi(y)\equiv s^{-1}(y),
    \;
    y\in[\underline{y},\bar{y}].
$
Because $s\in C^1(\Theta)$ and $s'(\theta)>0$, the inverse $\psi$ is
well defined and strictly increasing. The density of
$y=s(\theta)$ is
$
    g(y)
    = \frac{
        f(\psi(y))
    }{
        s'(\psi(y))
    },
$
which is strictly positive on $[\underline{y} ,\bar{y}]$.

Define
$
    U^*(y)
    \equiv
    u^*(\psi(y)), \;  \widetilde{\Delta}^*(y)
    \equiv
    \Delta^*(\psi(y)).
$
Since the action rules are continuous,
$\widetilde{\Delta}^*$ is continuous. Incentive
compatibility is equivalent to
\begin{equation}
\label{eq:lemma2-ic-y}
    U^*(y)
    \geq
    U^*(\hat{y})
    +(y-\hat{y})
        \widetilde{\Delta}^*(\hat{y})
    \qquad
    \text{for all }y,\hat{y}\in[\underline{y},\bar{y}].
\end{equation}
For any $y_2>y_1$, applying
\eqref{eq:lemma2-ic-y} in both directions gives
$
    \widetilde{\Delta}^*(y_1)
    \leq
    \frac{
        U^*(y_2)-U^*(y_1)
    }{
        y_2-y_1
    }
    \leq
    \widetilde{\Delta}^*(y_2).
$
It follows that $U^*$ is convex and
$\widetilde{\Delta}^*$ is increasing. Since
$\widetilde{\Delta}^*$ is continuous, letting $y_2$ converge to
$y_1$ yields
$
    {U^*}'(y)
    =
    \widetilde{\Delta}^*(y)
    \;
    \text{for every }y\in[\underline{y},\bar{y}],
$
where the derivatives at the endpoints are understood as one-sided
derivatives.

Using the balancing condition $ x_1^*(\theta)
    =
    m^*(\theta)$ and changing
variables from $\theta$ to $y$, the principal's expected loss can be
written as
\begin{equation}
\label{eq:lemma2-reduced-objective}
    \mathcal{J}(U)
    =
    \int_{\underline{y}}^{\bar{y}}
    \left[
        \frac{1}{2}
        \bigl(
            U(y)-2\psi(y)
        \bigr)^2
        +
        y(1-y)\bigl(U'(y)\bigr)^2
    \right]
    g(y)\,dy.
\end{equation}
The relevant feasible functions are nonnegative, continuously
differentiable, and convex. Conversely, every such function $U$
can be implemented by a continuous mechanism satisfying both
incentive compatibility and individual rationality. Indeed,
letting
$
    u(\theta)=U(s(\theta)),
    \;
    \Delta(\theta)=U'(s(\theta)),
$
and setting
\[
    x_1(\theta)=\frac{u(\theta)}{2},  \quad  x_{2h}(\theta)
    =
    \frac{u(\theta)}{2}
    +
    \bigl(1-s(\theta)\bigr)\Delta(\theta), \quad    x_{2l}(\theta)
    =
    \frac{u(\theta)}{2}
    -
    s(\theta)\Delta(\theta)
\]
gives continuous action rules. Incentive compatibility follows from
the convexity inequality
$
    U(y)
    \geq
    U(\hat{y})
    +(y-\hat{y})U'(\hat{y}).
$

Suppose, toward a contradiction, that
$
    \Delta^*(\theta_0)<0
$
for some $\theta_0\in\Theta$. By continuity,
$\Delta^*$ is strictly negative on a nondegenerate interval.
Equivalently, by $ {U^*}'(y)
    =
    \widetilde{\Delta}^*(y)$,
${U^*}'$ is strictly negative on a set of positive measure.
Therefore, $U^*$ is not increasing.

Let
$
    c
    \equiv
    \min
    \left(
        \operatorname*{arg\,min}_{y\in[\underline{y},\bar{y}]}U^*(y)
    \right)
$
be the leftmost minimizer of $U^*$. Since $U^*$ is convex, it is
decreasing on $[\underline{y},c]$ and increasing on $[c,\bar{y}]$. Since $U^*$
is not increasing, we have $c>\underline{y}$.

Define
$
    R(y)\equiv U^*(y)-2\psi(y).
$
Because $U^*$ is decreasing and $2\psi(y)$ is strictly increasing on
$[\underline{y},c]$, the function $R$ is strictly decreasing on this interval.
Choose a constant $k$ as follows:
\[
    k
    \equiv
    \begin{cases}
        U^*(c),
        &
        \text{if }R(c)\geq0,
        \\[2mm]
        U^*(\underline{y}),
        &
        \text{if }R(\underline{y})\leq0,
        \\[2mm]
        U^*(y_0)=2\psi(y_0),
        &
        \text{if }R(\underline{y})>0>R(c),
    \end{cases}
\]
where, in the third case, $y_0\in(\underline{y},c)$ is the point satisfying $U^*(y_0)=2\psi(y_0)$.
Let
$
    r
    \equiv
    \sup
    \left\{
        y\in[c,\bar{y}]:
        U^*(y)\leq k
    \right\}.
$
The set is nonempty because it contains $c$. Define
\[
    \widehat{U}(y)
    \equiv
    \begin{cases}
        k,
        &
        y\in[\underline{y},r],
        \\[1mm]
        U^*(y),
        &
        y\in(r,\bar{y}].
    \end{cases}
\]
If $r<\bar{y}$, continuity implies $U^*(r)=k$. Since $r\geq c$ and $U^*$
is increasing on $[c,\bar{y}]$, the function $\widehat{U}$ is convex and
increasing. Because $k$ is a value of the nonnegative function $U^*$,
we have $k\ge0$. Hence $\widehat U\ge0$, so the flattening
also preserves individual rationality.

By construction, whenever $\widehat{U}(y)\neq U^*(y)$, the constant
$k$ lies between $U^*(y)$ and $2\psi(y)$. Specifically, if
$R(c)\geq0$, then
$
    2\psi(y)\leq k\leq U^*(y);
$
if $R(\underline{y})\leq0$, then
$
    U^*(y)\leq k\leq 2\psi(y);
$
and, in the remaining case,
$
    2\psi(y)\leq k\leq U^*(y)
    \;\text{for }y\leq y_0,
$
while
$
    U^*(y)\leq k\leq 2\psi(y)
    \;\text{for }y\geq y_0.
$
Consequently,
\begin{equation*}
    \bigl(
        \widehat{U}(y)-2\psi(y)
    \bigr)^2
    \leq
    \bigl(
        U^*(y)-2\psi(y)
    \bigr)^2
    \qquad
    \text{for every }y\in[\underline{y},\bar{y}].
\end{equation*}
Moreover,
$
    \widehat{U}'(y)
    =
    0
    \;
    \text{for almost every }y\in(\underline{y},r),
$
and
$
    \widehat{U}'(y)
    =
    {U^*}'(y)
    \;
    \text{for almost every }y\in(r,\bar{y}).
$
Thus,
\begin{equation*}
    y(1-y)
    \bigl(
        \widehat{U}'(y)
    \bigr)^2
    \leq
    y(1-y)
    \bigl(
        {U^*}'(y)
    \bigr)^2
\end{equation*}
almost everywhere. The inequality is strict on a set of positive
measure because ${U^*}'<0$ on such a set and
$
    y(1-y)g(y)>0
$
almost everywhere in the interior of $[\underline{y},\bar{y}]$. Therefore, we have
\begin{equation}
\label{eq:lemma2-strict-improvement}
    \mathcal{J}(\widehat{U})
    <
    \mathcal{J}(U^*).
\end{equation}

The function $\widehat{U}$ may have a kink at $r$, so it need not
directly induce continuous action rules. We therefore approximate it
by continuously differentiable convex functions.

Let $\widehat{U}'_+(y)$ denote the right derivative of
$\widehat{U}$ for $y\in[\underline{y},\bar{y})$.
Since $\widehat{U}$ is convex and increasing, this derivative
is nonnegative and increasing. Moreover, by construction,
$\widehat{U}'_+(y)$ equals either zero or ${U^*}'(y)$.
Since $U^*\in C^1([\underline{y},\bar{y}])$, it follows that
$
    0\leq \widehat{U}'_+(y)\leq M
    \equiv
    \max_{z\in[\underline{y},\bar{y}]}
    \bigl|{U^*}'(z)\bigr|
    <\infty.
$
Define its value at the upper endpoint by
$
    \widehat{U}'_+(\bar{y})
    \equiv
    \lim_{y\uparrow\bar{y}}\widehat{U}'_+(y),
$
where the limit exists and is finite because
$\widehat{U}'_+$ is increasing and bounded.
Extend $\widehat{U}'_+$ to $\mathbb{R}$ by setting it equal
to $\widehat{U}'_+(\underline{y})$ for $y<\underline{y}$
and to $\widehat{U}'_+(\bar{y})$ for $y>\bar{y}$. Let $\rho:\mathbb{R}\to\mathbb{R}_+$ be a smooth function with compact support satisfying
$
    \int_{\mathbb{R}}\rho(t)\,dt=1,
$
and, for each $j\in\mathbb{N}$, define
$
p_j(y) \equiv \int_{\mathbb{R}} j\rho(jt)  \widehat{U}'_+(y-t)\,dt.
$

$p_j(y)$ is a weighted average of the slopes near $y$. It is smooth, nonnegative, and increasing. Define
$
    U_j(y)
    \equiv
    \widehat{U}(\underline{y})
    +
    \int_{\underline{y}}^y p_j(t)\,dt \ge k \ge 0.
$
Then $U_j$ is $C^1$, nonnegative, convex, and increasing. Moreover,
$
    U_j\longrightarrow\widehat{U}
    \;\text{uniformly on }[\underline{y},\bar{y}],
$
and
$
    U_j'(y)=p_j(y)
    \longrightarrow
    \widehat{U}'(y)
$
for almost every $y\in[\underline{y},\bar{y}]$. Since the sequence of derivatives is
uniformly bounded, dominated convergence applied to
\eqref{eq:lemma2-reduced-objective} gives
$
    \mathcal{J}(U_j)
    \longrightarrow
    \mathcal{J}(\widehat{U}).
$
By \eqref{eq:lemma2-strict-improvement}, for all sufficiently large
$j$, $    \mathcal{J}(U_j)
    <
    \mathcal{J}(U^*).$
For such a $j$, define $    u_j(\theta)\equiv U_j(s(\theta)),
    \; \Delta_j(\theta)\equiv U_j'(s(\theta)),$ 
and
\[
    x_{1,j}(\theta)
    =
    \frac{u_j(\theta)}{2}, \; \; x_{2h,j}(\theta)
    =
    \frac{u_j(\theta)}{2}
    +
    \bigl(1-s(\theta)\bigr)\Delta_j(\theta), \; \; x_{2l,j}(\theta)
    =
    \frac{u_j(\theta)}{2}
    -
    s(\theta)\Delta_j(\theta).
\]

These action rules are continuous. Since $U_j$ is convex,
$
    U_j(y)
    \geq
    U_j(\hat{y})
    +(y-\hat{y})U_j'(\hat{y}),
$
so the induced mechanism is incentive compatible. Individual rationality follows from $u_j(\theta) = U_j(s(\theta)) \geq k \geq 0$. Its expected loss
is $\mathcal{J}(U_j)$, which is strictly lower than
$\mathcal{J}(U^*)$. This contradicts the optimality of $x^*$.

Therefore, no $\theta_0\in\Theta$ can satisfy
$\Delta^*(\theta_0)<0$. Hence,
$
    \Delta^*(\theta)\geq0
    \;
    \text{for every }\theta\in\Theta.
$
\end{proof}

\subsection{Proof of Theorem \ref{thm:general-monotonicity}}

\begin{proof}

The proof of Proposition~\ref{prop:nonnegative-spread} shows that
$
    u^*(\theta)=2x_1^*(\theta).
$
Combining this equality with $ u'(\theta)
    =
    s'(\theta)\Delta(\theta)$ (from incentive compatibility) yields
${x_{1}^*}'(\theta) = \frac{1}{2}s'(\theta)\Delta^*(\theta)$. Since 
$s'(\theta)>0$ and $\Delta^*(\theta)\geq0$, the derivative of
$x_1^*$ is nonnegative. Therefore, $x_1^*$ is increasing.

We next establish the monotonicity of $x_{2l}^*$ without requiring
$\Delta^*$ to be differentiable. From $x_1^*(\theta) = m^*(\theta)$, we have $x_{2l}^*(\theta) = x_1^*(\theta) - s(\theta)\Delta^*(\theta)$. Therefore,
$
    x_{2l}^*(\theta_i)
    =
    \frac{u_i}{2}-s_i\Delta_i
$ for $ i\in\{1,2\}$.
Using the upper bound in \eqref{eq:lemma1-general-IC-bounds}, we obtain
\begin{align*}
    x_{2l}^*(\theta_2)-x_{2l}^*(\theta_1)
    &=
    \frac{1}{2}(u_2-u_1)
    -s_2\Delta_2+s_1\Delta_1
    \\
    &\leq
    \frac{1}{2}(s_2-s_1)\Delta_2
    -s_2\Delta_2+s_1\Delta_1
    \\
    &=
    -\frac{1}{2}
    (s_2-s_1)\Delta_2
    -
    s_1(\Delta_2-\Delta_1)
    \leq0.
\end{align*}
The final inequality follows from
$
    s_2\geq s_1\geq0,
    \;
    \Delta_2\geq\Delta_1,
    \;
    \Delta_2\geq0.
$
Hence $x_{2l}^*$ is decreasing on $\Theta$.

Finally, consider any interior type at which $\Delta^*$ is
differentiable. Differentiating
$x_{2l}^*(\theta) = x_1^*(\theta) - s(\theta)\Delta^*(\theta)$ and using ${x_{1}^*}'(\theta) = \frac{1}{2}s'(\theta)\Delta^*(\theta)$ gives
\begin{align*}
    {x_{2l}^*}'(\theta)
    &=
    \frac{1}{2}s'(\theta)\Delta^*(\theta)
    -
    s'(\theta)\Delta^*(\theta)
    -
    s(\theta){\Delta^*}'(\theta)
    =
    -\frac{1}{2}s'(\theta)\Delta^*(\theta)
    -
    s(\theta){\Delta^*}'(\theta).
\end{align*}
Because $\Delta^*$ is
increasing,
$
    {\Delta^*}'(\theta)\geq0
$
at every point at which its derivative exists.

Similarly, differentiating 
$ x_{2h}^*(\theta)
    =
    x_1^*(\theta)
    +
    \bigl(1-s(\theta)\bigr)\Delta^*(\theta)$, we obtain 
\begin{align*}
    {x_{2h}^*}'(\theta)
    &=
    \frac{1}{2}s'(\theta)\Delta^*(\theta)
    -
    s'(\theta)\Delta^*(\theta)
    +
    \bigl(1-s(\theta)\bigr)
        {\Delta^*}'(\theta)
    =
    -\frac{1}{2}s'(\theta)\Delta^*(\theta)
    +
    \bigl(1-s(\theta)\bigr)
        {\Delta^*}'(\theta).
\end{align*}

It remains to establish the decreasing property of
$x_{2h}^*$ near the highest type. Let us use $s(\theta) \equiv y$ as the effective screening type. Use the notations related to $y$ defined in the proof of Proposition~\ref{prop:nonnegative-spread}. 
The envelope and balancing conditions imply
$
    {U^*}'(y)=\Delta^*(\psi(y))
$
and
\begin{equation}
\label{eq:terminal-x2h-representation}
    x_{2h}^*(\psi(y))
    =
    \frac{U^*(y)}{2}
    +
    (1-y)\Delta^*(\psi(y)).
\end{equation}
By incentive compatibility and the nonnegative-spread result,
$\Delta^*(\psi(y))$ is continuous, nonnegative, and increasing in $y$.

We first note that
$
    \Delta^*(\psi(y))\not\equiv0.
$
Indeed, if $\Delta^*(\psi(y))\equiv0$, then
$ {U^*}'(y)=\Delta^*(\psi(y))$ implies that $U^*$ is constant.
The balancing condition and
\eqref{eq:terminal-x2h-representation} would then imply that all three
action rules coincide with the same constant. The smallest total
expected loss attainable by such a mechanism is $2R_1$.
The public-information mechanism, however, is feasible and has total
expected loss
$
    R_1+R_2<2R_1,
$
where the strict inequality follows from the informativeness of the
signal. This contradicts optimality and proves $\Delta^*(\psi(y))\not\equiv0$.

We distinguish two cases.

\medskip
\noindent
\emph{Case 1: $\bar{y}<1$.}

As shown in the proof of
Proposition~\ref{prop:nonnegative-spread}, the reduced loss is
\begin{equation*}
    \mathcal J(U)
    =
    \int_{\underline{y}}^{\bar{y}}
    \left[
        \frac12\bigl(U(y)-2\psi(y)\bigr)^2
        +y(1-y)\bigl(U'(y)\bigr)^2
    \right]g(y)\,dy,
\end{equation*}
and every continuously differentiable, convex, increasing
function $U$ induces a continuous incentive-compatible mechanism.

Define
$
    M
    \equiv
    \max_{y\in[\underline{y},\bar{y}]}
    \left|U^*(y)-2\psi(y)\right|.
$
Since $\Delta^*(\psi(\bar{y}))>0$ and $0<\bar{y}<1$,
we can choose $c\in(\underline{y},\bar{y})$ sufficiently close
to $\bar{y}$ that
$
    2y(1-y)\Delta^*(\psi(y))
    >
    M(\bar{y}-c)
    \;
    \text{for every }y\in[c,\bar{y}].
$
Define
\[
    \widehat U(y)
    =
    \begin{cases}
        U^*(y), & y\leq c,\\
        U^*(c)+\Delta^*(\psi(c))(y-c), & y>c.
    \end{cases}
\]
The function $\widehat U$ is continuously differentiable, nonnegative, convex,
and increasing. Hence
$
    U_\varepsilon
    \equiv
    (1-\varepsilon)U^*+\varepsilon\widehat U,
    \; \varepsilon\in[0,1],
$
also induces a continuous incentive compatible and individually rational mechanism.

For $y\in[c,\bar{y}]$, let
$
    h(y)
    \equiv
    \Delta^*(\psi(y))-\Delta^*(\psi(c))
    \geq0,
    \;
    D(y)
    \equiv
    U^*(y)-\widehat U(y)
    =
    \int_c^y h(t)\,dt.
$
Since $h$ is increasing,
$
    0\leq D(y)\leq(\bar{y}-c)h(y).
$
Optimality of $U^*$ therefore implies
\begin{align*}
    0
    &\leq
    \left.
        \frac{d}{d\varepsilon}\mathcal J(U_\varepsilon)
    \right|_{\varepsilon=0^+}
    =
    \int_c^{\bar{y}}
    \left[
        -\bigl(U^*(y)-2\psi(y)\bigr)D(y)
        -2y(1-y)\Delta^*(\psi(y))h(y)
    \right]g(y)\,dy
    \\
    &\leq
    \int_c^{\bar{y}}
    \left[
        M(\bar{y}-c)
        -2y(1-y)\Delta^*(\psi(y))
    \right]h(y)g(y)\,dy.
\end{align*}
The expression in square brackets is strictly negative throughout
$[c,\bar{y}]$. Since $h$ is continuous and nonnegative, and
$g>0$ almost everywhere, the preceding inequality requires
$h\equiv0$. Thus $\Delta^*(\psi(y))$ is constant and strictly
positive on $[c,\bar{y}]$.

By $ {U^*}'(y)=\Delta^*(\psi(y))$ and
\eqref{eq:terminal-x2h-representation},
$
    \frac{d}{dy}x_{2h}^*(\psi(y))
    =
    -\frac12\Delta^*(\psi(c))
    <0
    \;
    \text{for }y\in(c,\bar{y}).
$
Hence $x_{2h}^*(\psi(y))$ is strictly decreasing on
$[c,\bar{y}]$, even without the Lipschitz assumption.

\medskip
\noindent
\emph{Case 2: $\bar{y}=1$.}

First, integrating the envelope condition gives
$U^*(1)-U^*(y)
    =
    \int_y^1\Delta^*(\psi(t))\,dt.$
At $y=1$, the balancing condition gives
$x_{2h}^*(\psi(1))=U^*(1)/2$. Hence, for every
$y\in[\underline{y},1)$,
\begin{align*}
    \frac{
        x_{2h}^*(\psi(y))-x_{2h}^*(\psi(1))
    }{1-y}
    &=
    \frac{U^*(y)-U^*(1)}{2(1-y)}
    +\Delta^*(\psi(y))
    =
    \Delta^*(\psi(y))
    -
    \frac{1}{2(1-y)}
    \int_y^1\Delta^*(\psi(t))\,dt.
\end{align*}
The normalized integral is the average spread over $[y,1]$.
Since $t\mapsto\Delta^*(\psi(t))$ is increasing,
$
    \Delta^*(\psi(y))
    \leq
    \frac{1}{1-y}
    \int_y^1\Delta^*(\psi(t))\,dt
    \leq
    \Delta^*(\psi(1)).
$
By continuity, both bounds converge to
$\Delta^*(\psi(1))$ as $y\uparrow1$, so the average spread
converges to the same value. Consequently,
\[
    \lim_{y\uparrow1}
    \frac{
        x_{2h}^*(\psi(y))-x_{2h}^*(\psi(1))
    }{1-y}
    =
    \frac12\Delta^*(\psi(1))
    >0.
\]
Since $1-y>0$, it follows that
$  x_{2h}^*(\psi(y))>x_{2h}^*(\psi(1))$
for all $y<1$ sufficiently close to one. In particular,
$y\mapsto x_{2h}^*(\psi(y))$ cannot be increasing on its entire domain
$[\underline{y},1]$.

Now suppose that the allocation rules are Lipschitz continuous.
Since $s'$ is continuous and strictly positive on the compact
set $\Theta$, $\psi$ is Lipschitz. Consequently, both
$y\mapsto\Delta^*(\psi(y))$ and
$y\mapsto x_{2h}^*(\psi(y))$ are Lipschitz.

Let $K$ be a Lipschitz constant for
$y\mapsto\Delta^*(\psi(y))$. Since
$\Delta^*(\psi(1))>0$, choose $c<1$ sufficiently close to one
that
$
    K(1-c)<\frac12\Delta^*(\psi(c)).
$
For almost every $y\in[c,1]$,
\begin{align*}
    \frac{d}{dy}x_{2h}^*(\psi(y))
    &=
    (1-y)\frac{d}{dy}\Delta^*(\psi(y))
    -\frac12\Delta^*(\psi(y))
    \leq
    K(1-c)-\frac12\Delta^*(\psi(c))
    <0.
\end{align*}
Absolute continuity then implies that
$x_{2h}^*(\psi(y))$ is strictly decreasing on $[c,1]$.

Combining the two cases, $x_{2h}^*(\psi(y))$ cannot be increasing
on its entire domain. Under Lipschitz continuity, it is strictly
decreasing on some terminal interval $[c,\bar{y}]$.
Since $s$ is strictly increasing, setting
$\theta_0=\psi(c)$ implies that $x_{2h}^*(\theta)$ is strictly
decreasing on $[\theta_0,\bar{\theta}]$.
\end{proof}

\subsection{Proof of Theorem \ref{thm:sandwich}}

\begin{proof}
    Proposition~\ref{prop:nonnegative-spread}
implies $\Delta^*(\theta)\geq0$ for every $\theta\in\Theta$.
Since 
\[
x_{2h}^*(\theta)
    =
    x_1^*(\theta)
    +
    \bigl(1-s(\theta)\bigr)\Delta^*(\theta),  \quad
  x_{2l}^*(\theta)
    =
    x_1^*(\theta)
    -
    s(\theta)\Delta^*(\theta),
\]
and that $s(\theta)\in[0,1]$, we have 
$
    x_{2l}^*(\theta)
    \leq
    x_1^*(\theta)
    \leq
    x_{2h}^*(\theta).
$
\end{proof}

\subsection{Proof of Proposition \ref{prop:boundaryconditions}}

\begin{proof}

The first claim follows immediately from  $x^*_1(\theta)-x_{2l}^*(\theta) = s(\theta)\Delta^*(\theta)$.

For the second claim, suppose that $s(\underline{\theta})>0$.
Since $\Delta^*\geq 0$ by Proposition~2, it
suffices to rule out $\Delta^*(\underline{\theta})>0$.
Suppose, toward a contradiction, that
$\Delta^*(\underline{\theta})>0$.

For a balanced mechanism with interim payoff $u$ and spread
$\Delta$, the principal's expected loss is
\[
J(u,\Delta)
=
\int_{\Theta}
\left[
    \frac12\bigl(u(\theta)-2\theta\bigr)^2
    +s(\theta)(1-s(\theta))\Delta(\theta)^2
\right]f(\theta)\,d\theta.
\]
We construct a feasible perturbation that strictly reduces
this loss.

Fix $c\in(\underline{\theta},\bar{\theta})$, to be chosen
sufficiently close to $\underline{\theta}$, and define
\[
\phi(\theta)
=
\max\left\{\frac{c-\theta}{c-\underline{\theta}},\,0\right\},
\qquad
v(\theta)
=
\int_{\theta}^{\bar{\theta}}s'(t)\phi(t)\,dt.
\]
Thus, $\phi$ is continuous and decreasing, while $v$ is
continuously differentiable and nonnegative. Both functions
vanish on $[c,\bar{\theta}]$. For sufficiently small
$\varepsilon>0$, let
\[
\Delta_{\varepsilon}(\theta)
=
\Delta^*(\theta)-\varepsilon\phi(\theta),
\qquad
u_{\varepsilon}(\theta)
=
u^*(\theta)+\varepsilon v(\theta).
\]

We first verify feasibility. Because $\Delta^*$ and $-\phi$
are increasing, $\Delta_{\varepsilon}$ is increasing.
Moreover,
\[
u_{\varepsilon}'(\theta)
=
s'(\theta)\Delta^*(\theta)
-\varepsilon s'(\theta)\phi(\theta)
=
s'(\theta)\Delta_{\varepsilon}(\theta),
\]
so the envelope condition is preserved. Since $v\geq 0$,
we also have $u_{\varepsilon}\geq u^*\geq 0$, preserving
individual rationality. The associated balanced allocations,
\begin{align*}
    x_{1,\varepsilon}(\theta)
    =\frac{u_{\varepsilon}(\theta)}2, \; \;
    x_{2h,\varepsilon}(\theta)
    =\frac{u_{\varepsilon}(\theta)}2
      +(1-s(\theta))\Delta_{\varepsilon}(\theta), \; \;
    x_{2l,\varepsilon}(\theta)
    =\frac{u_{\varepsilon}(\theta)}2
      -s(\theta)\Delta_{\varepsilon}(\theta),
\end{align*}
are continuous and, by Lemma~1, incentive compatible.
Hence the perturbed mechanism is implementable. It coincides
with the original mechanism on $[c,\bar{\theta}]$.

The directional derivative of expected loss is
\begin{align*}
&\left.
\frac{d}{d\varepsilon}
J(u_{\varepsilon},\Delta_{\varepsilon})
\right|_{\varepsilon=0^+}
 =
\int_{\underline{\theta}}^c
\left[
    (u^*(\theta)-2\theta)v(\theta)
    -2s(\theta)(1-s(\theta))
       \Delta^*(\theta)\phi(\theta)
\right]f(\theta)\,d\theta\\
& \qquad \qquad =
\int_{\underline{\theta}}^c \phi(t)
\left[
    s'(t)\int_{\underline{\theta}}^t
    (u^*(r)-2r)f(r)\,dr
    -2s(t)(1-s(t))\Delta^*(t)f(t)
\right]dt,
\end{align*}
where the last equality follows by changing the order of
integration.

The expression in square brackets in the last line is
continuous in $t$. At $t=\underline{\theta}$, it equals
\[
-2s(\underline{\theta})(1-s(\underline{\theta}))
\Delta^*(\underline{\theta})f(\underline{\theta})<0.
\]
Indeed, $s(\underline{\theta})>0$,
$\Delta^*(\underline{\theta})>0$, and
$f(\underline{\theta})>0$, while
$s(\underline{\theta})<1$ follows from the strict increase
of $s$. Consequently, choosing $c$ sufficiently close to
$\underline{\theta}$ makes the expression in square brackets
strictly negative throughout $[\underline{\theta},c]$.
Since $\phi(t)>0$ for $t\in[\underline{\theta},c)$,
the directional derivative is strictly negative.
For sufficiently small $\varepsilon>0$, the perturbed
mechanism therefore has strictly lower expected loss,
contradicting optimality.

It follows that $\Delta^*(\underline{\theta})=0$.
Evaluating the two identities at the beginning of the proof
at $\underline{\theta}$ then gives
\[
x_1^*(\underline{\theta})
=x_{2l}^*(\underline{\theta})
=x_{2h}^*(\underline{\theta}),
\]
which proves the second claim.
\end{proof}

\subsection{Proof of Corollary \ref{cor:nonmonotone}}
\begin{proof}
By Theorem~\ref{thm:general-monotonicity} (iii), $x^*_{2h}$ is not increasing on $\Theta$. It therefore suffices to show that $x^*_{2h}$ is not decreasing on $\Theta$.

Suppose, toward a contradiction, that $x^*_{2h}$ is decreasing on $\Theta$. Since $s(\underline{\theta}) > 0$, Proposition~\ref{prop:boundaryconditions} gives $x^*_{2h}(\underline{\theta}) = x^*_1(\underline{\theta})$. Fix any $\theta \in \Theta$. The sandwich property (Theorem~\ref{thm:sandwich}) gives $x^*_{2h}(\theta) \ge x^*_1(\theta)$, and Theorem~\ref{thm:general-monotonicity} (i) gives $x^*_1(\theta) \ge x^*_1(\underline{\theta})$. Combining these with the assumed monotonicity of $x^*_{2h}$,
\[
x^*_{2h}(\underline{\theta})
\;\ge\; x^*_{2h}(\theta)
\;\ge\; x^*_1(\theta)
\;\ge\; x^*_1(\underline{\theta})
\;=\; x^*_{2h}(\underline{\theta}).
\]
Hence every inequality holds with equality, and in particular $x^*_1(\theta) = x^*_1(\underline{\theta})$ for every $\theta \in \Theta$. Thus $x^*_1$ is constant on $\Theta$.

By the balancing condition, $x^*_1 = u^*/2$, so $u^*$ is constant as well. By Lemma~\ref{lem:IC} (ii),
\[
0 \;=\; u^*(\theta) - u^*(\underline{\theta}) \;=\; \int_{\underline{\theta}}^{\theta} s'(t)\,\Delta^*(t)\,dt
\qquad \text{for every } \theta \in \Theta.
\]
The integrand is continuous and nonnegative, since $s' > 0$ and $\Delta^* \ge 0$ by Proposition~\ref{prop:nonnegative-spread}. It follows that $\Delta^* \equiv 0$ on $\Theta$. This contradicts $\Delta^* \not\equiv 0$, which is established in the proof of Theorem~\ref{thm:general-monotonicity}. Therefore $x^*_{2h}$ is not decreasing on $\Theta$, and the corollary follows.
\end{proof}

\bibliographystyle{econ} 
\bibliography{references}

\end{appendices}
\end{document}